\newtheorem{theorem}{Theorem}[section]
\newtheorem{lemma}[theorem]{Lemma}
\theoremstyle{definition}
\newtheorem{example}{Example}
\newcommand{\Pro}{\textsf{Prover}}
\newcommand{\POR}{\textsf{POR}}
\newcommand{\Adv}{\textsf{Adversary}}
\newcommand{\Ver}{\textsf{Verifier}}
\newcommand{\Ext}{\textsf{Extractor}}
\newcommand{\eff}{\mathbb{F}}
\newcommand{\Pos}{\ensuremath{\mathsf{Possible}}}
\newcommand{\dist}{\ensuremath{\mathsf{dist}}}
\newcommand{\succp}{{\sf succ}({\cal P})}
\begin{document}

\title{A coding theory {foundation} for the analysis of {general}  unconditionally secure 
proof-of-retrievability schemes for cloud storage}
\author{Maura~B.~Paterson\\
Department of Economics,
Mathematics and Statistics\\ Birkbeck, University of London,
Malet Street, London WC1E 7HX, UK
\and
Douglas~R.~Stinson\thanks{D.~Stinson's research is supported by NSERC discovery grant 203114-11}
\\David R. Cheriton School of Computer Science\\ University of Waterloo,
Waterloo, Ontario, N2L 3G1, Canada
\and
Jalaj Upadhyay
\\David R. Cheriton School of Computer Science\\ University of Waterloo,
Waterloo, Ontario, N2L 3G1, Canada}
\date{\today}

\maketitle

\begin{abstract}
There has been considerable recent interest in ``cloud storage'' wherein
a user asks a server to store a large file. One issue is whether the user can
verify that the server is actually storing the file, and typically a challenge-response
protocol is employed to convince the user that the file is indeed being stored
correctly. The security of these schemes is phrased in terms of an extractor which will
recover or retrieve the file given any ``proving algorithm'' that has a
sufficiently high success probability.

This paper treats proof-of-retrievability schemes in the model of unconditional
security, where an adversary has unlimited computational power. In this case retrievability
of the file can be modelled as error-correction in a certain code. We provide a
general analytical framework for such schemes that yields exact (non-asymptotic) reductions
that precisely quantify conditions for extraction to succeed as a function of the success
probability of a proving algorithm, and we apply this analysis to several
archetypal schemes.  In addition, we provide a new methodology for the analysis
of keyed \POR \  schemes in an unconditionally secure setting, and use it to
prove the security of a modified version of a scheme due to Shacham and Waters under a slightly
restricted attack model, thus providing the first example of a keyed \POR \ scheme
with unconditional security. {We also show how classical statistical techniques can be used to evaluate whether the responses of the prover are 
accurate enough to permit successful extraction.}
{Finally, we prove a new lower bound on storage and communication complexity of \POR \ schemes.}
\end{abstract}

\section{Introduction to Proof-of-retrievability Schemes}

In a {\it proof-of-retrievability scheme} (\POR \ scheme) \cite{BJO,DVW,JK,SW}, a user asks a
server to store a (possibly large) file.
The file is divided into {\it message blocks} that we view as elements
of a finite field.
Typically, the file will be ``encoded'' using a code such as
a Reed-Solomon code. The code provides redundancy, enabling erasures
or corrupted message blocks to be corrected.

In order for the user to be ensured that the file is being
stored correctly on the server, a challenge-response protocol is periodically invoked by the
user, wherein the server (the \Pro ) must give a correct response
to a random challenge chosen by the user
(the \Ver).
This response will typically be a function of one or more message blocks.
We do not assume that the user is storing the file. Therefore, in the basic version of
a \POR \ scheme,
the user must precompute and store a sufficient number of challenge-response pairs,
before transmitting the file to the server. After this is done, the user erases the file
but retains the precomputed challenge-response pairs. Such schemes are termed {\it bounded-use} schemes
in \cite{DVW}.
{An alternative is to use a {\it keyed} (or {\it unbounded-use}) scheme, which permits an arbitrary 
number
of challenges to be verified (i.e., the number of challenges does not need to be pre-determined by the 
user). We will discuss these a bit later.
Finally, the user could retain a copy of
the file if desired,} in which case the responses do not need to be precomputed.
This might be done if the server is just
being used to store a ``back-up'' copy of the file.

We wish to quantify the security afforded the user by engaging in the challenge-response
protocol (here, ``security'' means that the user's file can be correctly retrieved by
the user). The goal is that a server
who can respond correctly to a large proportion
of challenges somehow ``knows'' (or can compute)
the contents
of the file (i.e., all the message blocks).
This is formalised through the notion of
an {\it extractor}, which takes as input a description of a ``proving algorithm'' $\mathcal{P}$ for
a certain unspecified file, and then outputs the file. Actually, for the schemes
we study in this paper, the extractor only needs {\it black-box}
access to the proving algorithm. The proving algorithm is created
by the server, who is regarded as the adversary in this game.
It should be the case that
a proving algorithm that is correct with a probability  that is sufficiently close to $1$
will allow the extractor to determine the correct file.
The probability that the proving algorithm $\mathcal{P}$ gives a correct response
for a randomly chosen challenge is denoted by $\mathsf{succ}(\mathcal{P})$. We assume
that $\mathcal{P}$ always gives some response, so it follows that
$\mathcal{P}$ will give an incorrect response with probability $1 - \mathsf{succ}(\mathcal{P})$.

{To summarise, we list} the components in a \POR \ scheme and the 
extractor-based security definition we use.
{Note that we are employing standard models developed in the literature; for a more detailed discussion,
see \cite{BJO,JK,SW}.}
\begin{itemize}
\item The \Ver \ has a  message ${m \in (\eff_q)^k}$ which he redundantly encodes as ${M \in (\eff_q)^n}$.
\item $M$ is given to the \Pro. In the case of a keyed scheme,
the \Pro \ may also be supplied with an additional tag, $S$.
\item The \Ver \ retains appropriate information to allow him to verify responses.
This may or may not include a key $K$.
\item Some number of challenges and responses are carried out by the \Pro \ and \Ver.
In each round, the \Ver \ chooses a challenge $c$ and gives it to the \Pro, and the \Pro \
computes a response $r$ which is returned to the \Ver. The \Ver \ then verifies if the response
is correct.
\item The computations of the \Pro \ are described in a proving algorithm $\mathcal{P}$.
\item The success probability of $\mathcal{P}$ is the probability that it gives a correct
response when the challenge is chosen randomly.
\item The \Ext \ is given $\mathcal{P}$ and (in the case of a keyed scheme) $K$,
and outputs an unencoded message $\hat{m}$. Extraction succeeds if $\widehat{m} = m$.
\item The security of the \POR \ scheme is quantified by proving a statement of the form
``the \Ext \ succeeds with probability at least $\delta$ whenever the success probability of
$\mathcal{P}$ is at least $\epsilon$''. In this paper, we only consider schemes where $\delta = 1$,
that is, where extraction is always successful.
\end{itemize}

\subsection{{Previous Related Work}}


{Blum et al~\cite{BEGKN94} introduced the concept of memory checking. 
They formalized a model where performing any 
sequence of store and request operations on a  remote server behaves similarly to local storage.
Two seminal papers important to the development of \POR \ schemes are
\cite{LEBB,NR}.  Lillibridge {\it et al}.\ \cite{LEBB} first considered the problem of 
creating a backup of a large file by redundantly encoding the using
an erasure code and distributing pieces of the file to one or more servers.
Naor and Rothblum \cite{NR} studied {\it memory checkers} which use
message authentication techniques to verify
if a file is stored correctly on a remote server. They also consider
{\it authenticators}, which allow a verifier to interact with the server and
reconstruct the file provided that a sufficient number of ``audits''
are all correct.}

{The concept of {\it proof-of-retrievability} is due to Juels and Kaliski
\cite{JK}. A  \POR \ scheme incorporates a challenge-response protocol
in which a verifier can check that a file is being stored correctly, 
along with an {\it extractor} that will actually reconstruct the file,
given the algorithm of a ``prover'' who is able to correctly respond
to a sufficiently high percentage of challenges.}

{
There are also papers that describe the closely related (but slightly weaker) idea of a
{\it proof-of-data-possession scheme} (\textsf{PDP} scheme), e.g., \cite{Aten}.
A \textsf{PDP} scheme permits the possibility that 
not all of the message blocks can be reconstructed.}
{Atieniese {\it et al.}\ \cite{Aten} also introduced the idea 
of using {\it homomorphic authenticators} to reduce the communication complexity of the system. 
Shacham and Waters~\cite{SW} showed that the scheme of 
Ateniese {\it et al.}\ can be transformed in to 
a \POR \ scheme by constructing an extractor that extracts the file 
from the responses of the server on audits.
 
Bowers, Juels, and Oprea~\cite{BJO} used error-correcting codes, in particular the idea 
of an ``outer'' and an ``inner'' code (in much the same 
vein as concatenated codes), to get a good balance between 
the server storage overhead and computational overhead in 
responding to the audits. {A paper that discusses a 
coding-theoretic approach in the setting of storage enforcement is
\cite{HKRU}; this paper makes use of list-decoding techniques, 
but it concentrates
on the storage requirements of the server.}

Dodis, Vadhan and Wichs \cite{DVW}
provide the first example of an unconditionally secure {\POR } \ scheme,
also constructed from an error-correcting code, with extraction performed through
list decoding in conjunction with the use of an
almost-universal hash function. 
We discuss this particular paper further in Section \ref{comparison.sec}.}

\subsection{{Our Contributions}}

In this paper, we
treat the general construction of extractors for \POR \ schemes, with the aim of
establishing the precise conditions under which extraction is possible in the
setting of {\it unconditional security}, where the adversary is assumed to have
unlimited computational capabilities.  In this setting, it turns out that extraction
can be interpreted naturally as nearest-neighbour decoding in a certain code
(which we term a ``response code''). {Previously, error-correcting codes have
been used in specific constructions of \POR \ schemes; here, we propose that 
error-correcting codes constitute the natural foundation to construct as well as analyse arbitrary \POR \
schemes.}

{There are several advantages of studying unconditionally secure \POR \ schemes.
First, the schemes are easier to understand and analyse because we are not making use of 
any additional cryptographic primitives or unproven assumptions (e.g., PRFs, signatures, bilinear pairings, 
MACS, hitting samplers, random oracle model, etc.).
This allows us to give very simple exact analyses of various schemes.
Secondly, the essential role of error-correcting codes in the design and analysis
of \POR \ schemes becomes clear: codes are not just a method of 
constructing \POR \ schemes; rather, every \POR \ scheme gives rise to a code in a natural way.}

The success of the extraction process usually depends
on the distance of the code used to initially encode the file; when the distance of this code
is increased, the extraction process will be successful for less successful provers,
thus increasing the security afforded the user. {As we mentioned earlier,} 
we quantify the security of a \POR \ scheme  by specifying
a value $\epsilon$ and proving that the extraction process will
{\it always} be able to extract the file, given a prover $\mathcal{P}$ with success probability
$\mathsf{succ}(\mathcal{P}) > \epsilon$.  (In some other papers, weaker types of extractors are studied, such
as extractors that succeed with some specified probability less than 1, or extractors that
only recover some specified fraction of the original data.)  This allows us to derive conditions which
guarantee that extraction will succeed, and to compute exact (or tight) bounds,
as opposed to the mainly asymptotic bounds appearing in \cite{DVW}.

We exemplify our approach by considering several archetypal \POR \  schemes. We consider 
{keyed schemes as well as keyless schemes.}  {In Section \ref{keyless.sec}, we first} 
consider the fundamental case where the
server is just required to return one or more requested message blocks.
Then we progress to a more general scheme where the server must compute
a specified linear combination of certain message blocks. Both of these are
keyless schemes. 

{In Section \ref{keyed.sec}, we} investigate 
the Shacham-Waters scheme \cite{SW}, which is a  keyed scheme,
modified appropriately to fit the setting of unconditional security.
For this scheme, we note that unconditional security can be achieved {only if the prover 
does not have access to a verification oracle.} It is also necessary to {analyse} the success 
probability of
a proving algorithm {in the average case,  over the set of keys that are consistent with
the information given to the prover}.  This new analytical approach is the first
to {allow} the {construction} of a {secure} keyed \POR \ scheme in the unconditionally secure setting.

{In Section \ref{numer.sec}, we look more closely at the numerical conditions we have derived 
for the various schemes we studied and we provide some useful comparisons
and estimates.}

{We desire that} successful extraction can be accomplished whenever
 $\mathsf{succ}(\mathcal{P})$ exceeds some prespecified threshold.
But this raises the question as to how the user is able to determine
(or estimate) $\mathsf{succ}(\mathcal{P})$.
In many practical \POR \  schemes, the only interaction a user has with the server is through the challenge-response
protocol.  We show {in Section \ref{confidence.sec}} that classical statistical techniques can be used to provide
a systematic basis for evaluating {whether the responses of the prover are 
accurate enough to permit successful extraction}.

{The main overhead of keyed (unbounded-use) unconditionally secure schemes
(relative to computationally secure schemes) is in the storage requirements of the user.  While this
may be problematic in some circumstances, there are significant applications, such as remote backup
services, for which such schemes do genuinely represent a practical solution.
We also show in Section \ref{lower.sec}
that a significant additional storage requirement cannot be avoided
in this setting, by proving a new information-theoretic
lower bound on storage and communication requirements of 
\POR \ schemes. Our new bound is an improvement of the information-theoretic
lower bound for memory checkers and authenticators proven in \cite{NR}.}

We note that our goal is not to identify a single ``best'' \POR \  scheme,
but rather to provide a useful new methodology to analyse the exact security of various \POR \  schemes
in the unconditionally secure setting.   The \POR\ problem is one that lends
itself naturally to analysis in the unconditionally secure model; the schemes we consider are natural
examples of \POR\ protocols that do not require cryptographic building blocks
depending on computational assumptions.  

For reference, we
provide a list of notation used in this paper in Table \ref{notation.tab} in the Appendix.

\subsection{{Comparison with Dodis, Vadhan and Wichs \cite{DVW}}}
\label{comparison.sec}

{Our work is most closely related to that of Dodis, Vadhan and Wichs \cite{DVW}.
In \cite{DVW}, it is stated that ``there is a clear relation between our problem and
the erasure/error decoding of error-correcting codes''. Our paper is in some sense
a general exploration of these relations, whereas \cite{DVW} is mainly devoted
to a specific construction for a \POR \ scheme that is based on Reed-Solomon codes.}

{We can highlight several differences between our approach and that of \cite{DVW}:
\begin{itemize}
\item {In the setting of unconditional security, the paper \cite{DVW} only provides
bounded-use schemes. Our scheme is the first unbounded-use scheme in this setting.}
\item The paper \cite{DVW} mainly uses a (Reed-Solomon) code to construct a specific  \POR \ scheme.
In contrast, we are studying the connections between an arbitrary \POR \ scheme 
and the distance of the (related) code that describes the behaviour of the scheme on the possible queries
that can be made to the scheme. Stated another way, our approach is to 
derive a code from a \POR \ scheme, and then to prove security properties of the \POR \ scheme 
as a consequence of properties of this code.
\item The paper \cite{DVW} uses various tools and algorithms to construct their \POR \ schemes, including
Reed-Solomon codes, list decoding, {almost-universal} hash families, 
and hitting samplers based on expander graphs. 
We just use an error-correcting code in our analyses.
\item We base our analyses on nearest-neighbour decoding (rather than list decoding,
which was used in \cite{DVW}), and we present conditions under
which extraction will succeed with probability equal to 1 (in \cite{DVW}, extraction succeeds with
probability close to 1, depending in part on properties of a certain class of hash functions used
in the protocol).
\item {The ``POR codes'' in \cite{DVW} are actually protocols that
consist of challenges and responses 
involving a prespecified number of message blocks; we allow challenges in which the 
responses depend on an arbitrary number of message blocks.}
\item All our analyses are exact and concrete, whereas the analyses in \cite{DVW} are asymptotic.
\end{itemize}
}

\section{Analysis of Several Keyless  Schemes}
\label{keyless.sec}

\subsection{A Basic   Scheme}

As a ``warm-up'', we illustrate our coding theory approach by analysing a simple \POR \  scheme, which
we call the \textsf{Basic  Scheme}. From now on, we usually refer to the
user as the \Ver \ and the server as the \Pro. The basic scheme is  presented in Figure \ref{fig-basic}.

\begin{figure}[tb]
\caption{\textsf{Basic  Scheme}}\label{fig-basic}
\begin{center}
\fbox{
    \begin{minipage}{15cm}
\begin{description}
\item [initialisation] \mbox{\quad} \\ Given a \emph{message} $m \in \mathcal{M}$, encode $M$ as
$e(m) = M \in \mathcal{M}^*$.  The \emph{message encoding function}
$e: \mathcal{M} \rightarrow \mathcal{M}^*$ is a public bijection.
We will assume that $\mathcal{M} = (\eff_q)^k$ and
$\mathcal{M}^* \subseteq (\eff_q)^n$, where $q$ is a prime power and $n \geq k$.
We write $M = (m_1, \dots , m_n)$, where the components $m_1, \dots , m_n$ are
termed \emph{message blocks}.

The \Ver \ gives the \emph{encoded message} $M$ to the
\Pro. The \Ver \ also generates a random \emph{challenge} $c \in \{1, \dots , n\}$ and
stores $c$ and the message block $m_c$.

\item [challenge-response] \mbox{\quad} \\
The \Ver \ gives the challenge $c$ to the \Pro.
The \Pro \ responds with the message block $r= m_c$. The \Ver \ checks that the
\emph{response} $r$ returned by the \Pro \ matches the stored value $m_c$.
\end{description}
\end{minipage}
  }
\end{center}
\end{figure}



{Here is the} adversarial model we use to analyse the security of  a keyless \POR \  scheme:  
{\begin{enumerate}
\item The \Adv \  is given the set of encoded messages $\mathcal{M}^*$.
\item The \Adv \   selects $m \in \mathcal{M}$.
\item The \Adv \   outputs a deterministic\footnote%
{{We note that, without loss of generality, the proving algorithm
can be assumed to be deterministic. This follows from the observation 
that any probabilistic proving algorithm can be replaced
by a deterministic algorithm relative to which the success of the extractor 
defined in Figure \ref{fig1} will not be increased.}} 
\emph{proving algorithm} for $m$, denoted $\mathcal{P}$.
\end{enumerate} }
%

The {\it success probability} of $\mathcal{P}$ is defined to be
\[\mathsf{succ}(\mathcal{P}) = \mathbf{Pr} [\mathcal{P}(c) = m_c],\]
where $e(m) = (m_1, \dots , m_n)$ and this probability is computed over
a challenge $c \in \{1, \dots , n\}$ chosen uniformly at random.

Now we wish to construct an \Ext \ that will take as input a proving algorithm
$\mathcal{P}$ for some unknown message $m$. The \Ext \ will output a message
$\widehat{m} \in \mathcal{M}$. We say that the \POR \ scheme is {\it secure} provided that
$\widehat{m} = m$ whenever  the success probability of $\mathcal{P}$ is sufficiently close to $1$.

The \Ext \ for the \textsf{Basic  Scheme} is presented in Figure \ref{fig1}.
\begin{figure}[tb]
\caption{\Ext \ for the \textsf{Basic    Scheme}}\label{fig1}
\begin{center}
\fbox{
    \begin{minipage}{15cm}
    \begin{enumerate}
\item On input $\mathcal{P}$, compute the vector $M' = (m'_1, \dots , m'_n)$,
where $m'_c = \mathcal{P}(c)$ for all $c \in \{1, \dots , n\}$ (i.e., $m'_c$ is
the response computed by $\mathcal{P}$ when it is given
the challenge $c$).
\item Find $\widehat{M} \in \mathcal{M}^*$ so that $\dist (M', \widehat{M})$ is minimised,
where $\dist (\cdot,\cdot)$ denotes the hamming distance between two vectors.
\item Output $\widehat{m} = e^{-1}(\widehat{M})$.
\end{enumerate}
\end{minipage}
  }
\end{center}
\end{figure}

\begin{theorem}
\label{t1.1}
Suppose that $\mathcal{P}$ is a proving algorithm for the \textsf{Basic  Scheme} for which
$\mathsf{succ}(\mathcal{P}) > 1 - d/(2n)$, where the
hamming distance of the set of encoded messages $\mathcal{M}^*$ is $d$.
Then the \Ext \ presented in Figure \ref{fig1} will always output $\widehat{m} = m$.
\end{theorem}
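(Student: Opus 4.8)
The plan is to show that the response vector $M'$ assembled by the \Ext\ lies strictly within Hamming distance $d/2$ of the genuine codeword $M = e(m)$, and then to invoke the classical unique-decoding argument for a code whose minimum distance is $d$. So the proof will have two short movements: translating the success-probability hypothesis into a distance bound, and then a triangle-inequality argument.

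First I would make the distance bound explicit. By construction $m'_c = \mathcal{P}(c)$ for every $c \in \{1,\dots,n\}$, and $\mathcal{P}(c) = m_c$ exactly on the set of challenges on which $\mathcal{P}$ succeeds. Since $c$ is uniform on an $n$-element set, that set has size precisely $n\,\mathsf{succ}(\mathcal{P})$ (an integer, so no rounding is lost), whence
\[
\dist(M', M) = n - n\,\mathsf{succ}(\mathcal{P}) = n\bigl(1 - \mathsf{succ}(\mathcal{P})\bigr) < n \cdot \frac{d}{2n} = \frac{d}{2},
\]
using the hypothesis $\mathsf{succ}(\mathcal{P}) > 1 - d/(2n)$.

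Next I would apply the triangle inequality to the codeword $\widehat{M} \in \mathcal{M}^*$ chosen by the \Ext\ to minimise $\dist(M', \widehat{M})$. Since $M \in \mathcal{M}^*$ is itself a candidate, $\dist(M', \widehat{M}) \le \dist(M', M) < d/2$, and therefore
\[
\dist(M, \widehat{M}) \le \dist(M, M') + \dist(M', \widehat{M}) < \frac{d}{2} + \frac{d}{2} = d .
\]
But $M$ and $\widehat{M}$ both lie in $\mathcal{M}^*$, which has minimum distance $d$, so any two distinct elements are at distance at least $d$; hence $\widehat{M} = M$. As $e$ is a bijection, $\widehat{m} = e^{-1}(\widehat{M}) = e^{-1}(M) = m$, which is what we want.

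There is no real obstacle here: this is exactly the statement that a code of minimum distance $d$ corrects any error pattern of weight less than $d/2$, applied to the ``error vector'' induced by the incorrect responses of $\mathcal{P}$. The only points worth a sentence of care are that $n\,\mathsf{succ}(\mathcal{P})$ is automatically an integer, and that the \emph{strict} inequality $\dist(M',M) < d/2$ is precisely what forces the nearest codeword to be unique and to equal $M$ — this is also the template that later sections will reuse, with $d$ replaced by the distance of whatever ``response code'' is derived from the scheme in question.
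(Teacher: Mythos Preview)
Your proof is correct and follows essentially the same route as the paper's own argument: translate the success probability into a Hamming-distance bound $\dist(M,M') < d/2$, use that $\widehat{M}$ minimises distance to $M'$ so $\dist(M',\widehat{M}) \le \dist(M',M)$, and then apply the triangle inequality together with the minimum-distance property of $\mathcal{M}^*$. The only cosmetic difference is that the paper carries the symbol $\delta = \dist(M,M')$ through and bounds $2\delta < d$ at the end, whereas you bound each summand by $d/2$ separately; the content is identical.
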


\begin{proof}
Let $M'$ be the $n$-tuple of responses computed by $\mathcal{P}$ and denote $\delta = \dist (M,M')$, where $M = e(m)$.
Denote $\epsilon = \mathsf{succ}(\mathcal{P})$. Then it is easy to see that $\epsilon = 1 - \delta/n$.
We want to prove that $\widehat{M} = M$.  We have that $\widehat{M}$ is a vector in $\mathcal{M}^*$ closest to $M'$.
 Since $M$ is a vector in $\mathcal{M}^*$ such that $\dist (M,M') = \delta$, it must be the case that $\dist (\widehat{M},M')
\leq \delta$. By the triangle inequality, we get
\[ \dist (M, \widehat{M}) \leq  \dist (M,M') + \dist (\widehat{M},M') \leq \delta + \delta = 2\delta.\]
However, \[ 2\delta = 2n(1 - \epsilon) < 2n\left( \frac{d}{2n} \right) = d.\]
Since $M$ and $\widehat{M}$ are vectors in $\mathcal{M}^*$ within distance $d$,
it follows that $M  = \widehat{M}$
and the \Ext \ outputs $m = e^{-1}(M)$, as desired.
\end{proof}

\subsection{General Keyless Challenge-Response Schemes}

In the simple protocol we analysed, a response to a challenge was just a particular
message block chosen from an encoded message $M$. We are interested in studying more complicated
protocols. For example, we might consider a response that consists of several message blocks from $M$,
or is computed as a function of one or more message blocks. In this section, we generalise
the preceding extraction process to handle arbitrary keyless challenge-response protocols.

In general, a challenge will be chosen from a specified \emph{challenge space} $\Gamma$,
and the response will be an element of a \emph{response space} $\Delta$.
The \emph{response function} $\rho : \mathcal{M}^* \times \Gamma \rightarrow \Delta $
computes the response
$r = \rho(M,c)$ given the encoded message $M$ and the challenge $c$.

For an encoded message $M \in \mathcal{M}^*$, we define the \emph{response vector}
\[r^M = ( \rho(M,c) : c \in \Gamma ) .\] That is, $r^M$ contains all the  responses
to all possible challenges for the encoded message $M$. Finally, define the \emph{response code}
(or more simply, the \emph{code}) of the
scheme to be
\[ \mathcal{R}^* = \{ r^M : M \in \mathcal{M}^* \} .\]
The codewords in $\mathcal{R}^*$ are just the response vectors that we defined above.

The \textsf{Generalised Scheme} is presented in Figure \ref{fig-generalised}.
Observe that $\mathcal{R}^* \subseteq \Delta^{\gamma}$, where $\gamma = |\Gamma|$.
We will assume that the mapping $M \mapsto r^M$ is an injection, and therefore
the hamming distance of $\mathcal{R}^*$ is greater than $0$ (in fact, we make this assumption for
all the schemes we consider in this paper).

\begin{figure}[tb]
\caption{\textsf{Generalised Scheme}}\label{fig-generalised}
\begin{center}
\fbox{
    \begin{minipage}{15cm}
    \begin{description}
\item [initialisation] \mbox{\quad} \\ Given a \emph{message} $m \in \mathcal{M}$, encode $m$ as
$e(m) = M \in \mathcal{M}^*$.
The \Ver \ gives $M$ to the
\Pro. The \Ver \ also generates a random challenge $c \in \Gamma$ and
stores $c$ and $\rho(M,c)$.

\item [challenge-response] \mbox{\quad} \\
The \Ver \ gives the challenge $c$ to the \Pro.
The \Pro \ responds with $r= \rho(M,c)$. The \Ver \ checks that the value
$r$ returned by the \Pro \ matches the stored value $\rho(M,c)$.
\end{description}
\end{minipage}
  }
\end{center}
\end{figure}

We now describe the generalised adversarial model.

\begin{enumerate}
\item The \Adv \  is given the response code $\mathcal{R}^*$.
\item The \Adv \   selects $m \in \mathcal{M}$.
\item The \Adv \   outputs a {deterministic} proving algorithm for $m$, denoted $\mathcal{P}$.
\end{enumerate}

The {\it success probability} of $\mathcal{P}$ is defined to be
\[\mathsf{succ}(\mathcal{P}) = \mathbf{Pr} [\mathcal{P}(c) = \rho(M,c)],\]
where $M = e(m)$ and this probability is computed over
a challenge $c \in \Gamma $ chosen uniformly at random.

Now we  construct an \Ext \ that will take as input a proving algorithm
$\mathcal{P}$ for some unknown message $m$. The \Ext \ will output a message
$\widehat{m} \in \mathcal{M}$. We say that the \POR \ scheme is {\it secure} provided that
$\widehat{m} = m$ whenever  the success probability of $\mathcal{P}$ is sufficiently close to $1$.
The \Ext \ for the \textsf{Generalised  Scheme} is presented in Figure \ref{fig2}.

\begin{figure}[tb]
\caption{\Ext \ for the \textsf{Generalised  Scheme}}\label{fig2}
\begin{center}
\fbox{
    \begin{minipage}{15cm}
    \begin{enumerate}
\item On input $\mathcal{P}$, compute the vector $R' = (r'_c : c \in \Gamma)$,
where $r'_c = \mathcal{P}(c)$ for all $c \in \Gamma$ (i.e., for every $c$, $r'_c$ is
the response computed by $\mathcal{P}$ when it is given
the challenge $c$).
\item Find $\widehat{M} \in \mathcal{M}^*$ so that $\dist (R', r^{\widehat{M}})$ is minimised.
\item Output $\widehat{m} = e^{-1}(\widehat{M})$.
\end{enumerate}
\end{minipage}
  }
\end{center}
\end{figure}

The following theorem relates the success probability of the extractor to the
hamming distance of the response code. Its proof is essentially identical to
the proof of Theorem \ref{t1.1}.

\begin{theorem}
\label{t1.2}
Suppose that $\mathcal{P}$ is a proving algorithm for the \textsf{Generalised    Scheme} for which
$\mathsf{succ}(\mathcal{P}) > 1 - d^*/(2\gamma)$, where the
hamming distance of the response code $\mathcal{R}^*$ is $d^*>0$.
Then the \Ext \ presented in Figure \ref{fig2} will always output $\widehat{m} = m$.
\end{theorem}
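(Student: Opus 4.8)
The plan is to follow the same nearest-neighbour decoding argument used for Theorem~\ref{t1.1}, now carried out inside the response code $\mathcal{R}^*$ rather than directly in $\mathcal{M}^*$. First I would let $R'$ be the response vector assembled by the \Ext\ from $\mathcal{P}$, set $M = e(m)$, and write $\delta = \dist(r^M, R')$. Since $\mathsf{succ}(\mathcal{P})$ is by definition the fraction of the $\gamma = |\Gamma|$ coordinates on which $R'$ agrees with $r^M = (\rho(M,c) : c \in \Gamma)$, we obtain the identity $\mathsf{succ}(\mathcal{P}) = 1 - \delta/\gamma$, equivalently $\delta = \gamma\bigl(1 - \mathsf{succ}(\mathcal{P})\bigr)$.

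Next I would invoke the definition of the \Ext: it returns $\widehat{M} \in \mathcal{M}^*$ minimising $\dist(R', r^{\widehat{M}})$. Because $M$ itself lies in $\mathcal{M}^*$ and achieves $\dist(R', r^M) = \delta$, the minimiser satisfies $\dist(R', r^{\widehat{M}}) \leq \delta$. Applying the triangle inequality for Hamming distance in $\Delta^{\gamma}$ then gives
\[ \dist(r^M, r^{\widehat{M}}) \leq \dist(r^M, R') + \dist(R', r^{\widehat{M}}) \leq 2\delta = 2\gamma\bigl(1 - \mathsf{succ}(\mathcal{P})\bigr) < 2\gamma \cdot \frac{d^*}{2\gamma} = d^*, \]
where the final inequality uses the hypothesis $\mathsf{succ}(\mathcal{P}) > 1 - d^*/(2\gamma)$.

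Finally, since $r^M$ and $r^{\widehat{M}}$ are both codewords of $\mathcal{R}^*$ lying at Hamming distance strictly less than the minimum distance $d^*$ of that code, they must coincide: $r^M = r^{\widehat{M}}$. The one ingredient beyond the proof of Theorem~\ref{t1.1} is that we must now pass from equality of response vectors back to equality of encoded messages; this is exactly the point at which the standing assumption that the map $M \mapsto r^M$ is an injection is used, yielding $M = \widehat{M}$ and hence $\widehat{m} = e^{-1}(\widehat{M}) = e^{-1}(M) = m$. I do not anticipate any genuine obstacle — the argument is essentially a verbatim transcription of Theorem~\ref{t1.1} with $\mathcal{M}^*$ replaced by $\mathcal{R}^*$ and $n$ replaced by $\gamma$ — so the only step deserving a moment's care is making sure the injectivity hypothesis is explicitly invoked, since in this generalised setting the codewords are response vectors rather than the encoded messages themselves.
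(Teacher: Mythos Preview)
Your proposal is correct and follows essentially the same approach as the paper's own proof: set $\delta = \dist(r^M, R')$, relate $\delta$ to $\mathsf{succ}(\mathcal{P})$, use minimality of $\widehat{M}$ plus the triangle inequality to bound $\dist(r^M, r^{\widehat{M}}) < d^*$, and conclude $M = \widehat{M}$. Your explicit invocation of the injectivity of $M \mapsto r^M$ to pass from $r^M = r^{\widehat{M}}$ back to $M = \widehat{M}$ is in fact slightly more careful than the paper, which leaves that step implicit.
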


\begin{proof}
Let $R'$ be the $\gamma$-tuple of responses computed by $\mathcal{P}$ and denote $\delta = \dist (r^M,R')$,
where $M = e(m)$.
Denote $\epsilon = \mathsf{succ}(\mathcal{P})$. Then it is easy to see that $\epsilon = 1 - \delta/\gamma$.
We want to prove that $\widehat{M} = M$.
We have that $r^{\widehat{M}}$ is a codeword in $\mathcal{R}^*$ closest to $R'$.
 Since $M$ is a codeword such that $\dist (r^M,R') = \delta$, it must be the case that
 $\dist (r^{\widehat{M}},R')
\leq \delta$. By the triangle inequality, we get
\[ \dist (r^M, r^{\widehat{M}}) \leq  \dist (r^M,R') + \dist (r^{\widehat{M}},R') \leq \delta + \delta = 2\delta.\]
However, \[ 2\delta = 2\gamma(1 - \epsilon) < 2\gamma\left( \frac{d^*}{2\gamma} \right) = d^*.\]
Since $r^M$ and $r^{\widehat{M}}$ are codewords within distance $d^*$,
it follows that $M  = \widehat{M}$
and the \Ext \ outputs $m = e^{-1}(M)$, as desired.
\end{proof}

Observe that the efficacy of this extraction process depends on the \emph{relative distance} of
the response code $\mathcal{R}^*$, which equals $d^* / \gamma$.
In the next subsections, we look at some specific examples of
challenge-response protocols.

\subsection{Multiblock Challenge  Scheme}

We present a \POR \  scheme that we term the \textsf{Multiblock Challenge    Scheme} in Figure
\ref{fig-multiblock} (this is the same as the  ``Basic PoR Code Construction'' from \cite{DVW}).

\begin{figure}[tb]
\caption{\textsf{Multiblock Challenge   Scheme}}\label{fig-multiblock}
\begin{center}
\fbox{
    \begin{minipage}{15cm}
\begin{description}
\item [challenge] \mbox{\quad} \\ A challenge is a subset of $\ell$ indices
$J \subseteq \{1, \dots , n\}$.
Therefore,
$\Gamma = \{ J \subseteq \{1, \dots , n\}, |J| = \ell\}$
 and $\gamma = \binom{n}{\ell}$.
\item [response] \mbox{\quad} \\
Given the challenge $J = \{i_1, \dots , i_{\ell}\}$ where
$1 \leq i_1 < \cdots < i_{\ell} \leq n$,
the correct response is the $\ell$-tuple \[ \rho(M,J) = (m_{i_1}, \dots , m_{i_{\ell}}).\]
{Suppose the \Ver \ receives a 
response $(r_1, \dots , r_{\ell})$. He then checks that 
$r_j = m_{i_j}$ for $1 \leq j \leq \ell$.}
\end{description}
In this scheme, we have $\Delta = (\eff_q)^{\ell}$.
\end{minipage}
  }
\end{center}
\end{figure}

\begin{lemma}
\label{l1.3}
Suppose that the hamming distance of $\mathcal{M}^*$ is $d$.
Then the hamming distance of the response code of the \textsf{Multiblock Challenge   Scheme}
is $d^* = \binom{n}{\ell} - \binom{n-d}{\ell}$.
\end{lemma}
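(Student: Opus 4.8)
The plan is to compute the Hamming distance of the response code $\mathcal{R}^*$ directly from the definition by relating the Hamming distance between two response vectors $r^M$ and $r^{M'}$ to the Hamming distance $\dist(M, M')$ between the underlying encoded messages. The key observation is that for the \textsf{Multiblock Challenge Scheme}, a coordinate of the response vector is indexed by an $\ell$-subset $J$, and the responses $\rho(M,J)$ and $\rho(M',J)$ agree \emph{exactly} when $J$ avoids all the positions where $M$ and $M'$ differ; they disagree otherwise. So if $\dist(M,M') = w$, the number of coordinates $J$ on which $r^M$ and $r^{M'}$ agree is $\binom{n-w}{\ell}$ (choosing all $\ell$ indices from the $n-w$ positions of agreement), and hence $\dist(r^M, r^{M'}) = \binom{n}{\ell} - \binom{n-w}{\ell}$.

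Next I would note that the function $w \mapsto \binom{n}{\ell} - \binom{n-w}{\ell}$ is monotonically nondecreasing in $w$ (since $\binom{n-w}{\ell}$ is nonincreasing in $w$, interpreting $\binom{a}{\ell} = 0$ for $a < \ell$). Therefore the minimum of $\dist(r^M, r^{M'})$ over all distinct pairs $M, M' \in \mathcal{M}^*$ is attained when $\dist(M,M')$ is as small as possible, namely when $\dist(M,M') = d$, the Hamming distance of $\mathcal{M}^*$. (Here I use the standing assumption from the \textsf{Generalised Scheme} that $M \mapsto r^M$ is injective, so distinct codewords in $\mathcal{R}^*$ come from distinct encoded messages, and $w \geq d \geq 1$.) This gives $d^* = \binom{n}{\ell} - \binom{n-d}{\ell}$, as claimed.

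To finish, I would spell out the two inequalities: there exist $M, M' \in \mathcal{M}^*$ with $\dist(M,M') = d$, so $d^* \leq \binom{n}{\ell} - \binom{n-d}{\ell}$; and for \emph{every} pair of distinct encoded messages $\dist(M,M') \geq d$, so by monotonicity $\dist(r^M, r^{M'}) \geq \binom{n}{\ell} - \binom{n-d}{\ell}$, giving $d^* \geq \binom{n}{\ell} - \binom{n-d}{\ell}$.

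The only mildly delicate point — the \emph{main obstacle}, such as it is — is the monotonicity argument and the need to handle the degenerate case $n - w < \ell$ cleanly (where the binomial coefficient should be read as zero), plus being careful that when $d = n$ the formula still reads correctly as $\binom{n}{\ell}$. Everything else is a routine counting argument. I would state the monotonicity as a one-line observation rather than belabor it.
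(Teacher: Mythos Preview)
Your proposal is correct and follows essentially the same approach as the paper: both arguments hinge on the observation that $r^M_J = r^{M'}_J$ if and only if $J$ is contained in the set of positions where $M$ and $M'$ agree, yielding $\dist(r^M,r^{M'}) = \binom{n}{\ell} - \binom{n-\delta}{\ell}$ for $\delta = \dist(M,M')$, and then minimising over $\delta \geq d$. You are more explicit than the paper about the monotonicity in $\delta$ and about checking both inequalities for $d^*$, but the substance is the same.
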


\begin{proof}
Suppose that $M, M' \in \mathcal{M}^*$ and $M \neq M'$.
Denote $\dist (M,M') = \delta$,
$M = (m_1, \dots , m_n)$ and $M' = (m'_1, \dots , m'_n)$.
It is easy to see that $r^M_J = r^{M'}_J$ if and only if
$J \subseteq \{ i : m_i = m'_i\}$. From this, it is immediate that
$\dist (r^M, r^{M'}) = \binom{n}{\ell} - \binom{n-\delta}{\ell}$.
The desired result follows because $\delta \geq d$.
\end{proof}

\begin{theorem}
\label{c1.4}
Suppose that $\mathcal{P}$ is a proving algorithm for the \textsf{Multiblock Challenge   Scheme} for which
\[ \mathsf{succ}(\mathcal{P}) > \frac{1}{2} + \frac{\binom{n-d}{\ell}}{2\binom{n}{\ell}},\]
where the hamming distance of $\mathcal{M}^*$ is $d$.
Then the \Ext \ presented in Figure \ref{fig2} will always output $\widehat{m} = m$.
\end{theorem}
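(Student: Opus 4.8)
The plan is to combine Theorem~\ref{t1.2} with the distance formula of Lemma~\ref{l1.3} and then simplify algebraically. Theorem~\ref{t1.2} tells us that extraction succeeds whenever $\mathsf{succ}(\mathcal{P}) > 1 - d^*/(2\gamma)$, where $d^*$ is the hamming distance of the response code $\mathcal{R}^*$ and $\gamma = |\Gamma|$. For the \textsf{Multiblock Challenge Scheme} we have $\gamma = \binom{n}{\ell}$, and Lemma~\ref{l1.3} gives $d^* = \binom{n}{\ell} - \binom{n-d}{\ell}$. So the whole argument is just substitution followed by a one-line simplification.

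First I would write down the bound from Theorem~\ref{t1.2} with these values plugged in:
\[
\mathsf{succ}(\mathcal{P}) > 1 - \frac{d^*}{2\gamma} = 1 - \frac{\binom{n}{\ell} - \binom{n-d}{\ell}}{2\binom{n}{\ell}}.
\]
Then I would simplify the right-hand side by splitting the fraction:
\[
1 - \frac{\binom{n}{\ell} - \binom{n-d}{\ell}}{2\binom{n}{\ell}}
= 1 - \frac{1}{2} + \frac{\binom{n-d}{\ell}}{2\binom{n}{\ell}}
= \frac{1}{2} + \frac{\binom{n-d}{\ell}}{2\binom{n}{\ell}},
\]
which is exactly the threshold in the statement of Theorem~\ref{c1.4}. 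Hence any proving algorithm $\mathcal{P}$ whose success probability exceeds this quantity satisfies the hypothesis of Theorem~\ref{t1.2} for the response code $\mathcal{R}^*$, and therefore the \Ext \ of Figure~\ref{fig2} always outputs $\widehat{m} = m$.

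There is no real obstacle here: the result is a direct corollary of the two earlier results, and indeed the theorem is labelled \texttt{c1.4} suggesting it was conceived as a corollary. The only thing worth being careful about is that Lemma~\ref{l1.3} gives $d^*$ in terms of the \emph{minimum} distance $d$ of $\mathcal{M}^*$ (the proof of the lemma observes $\dist(r^M, r^{M'}) = \binom{n}{\ell} - \binom{n-\delta}{\ell}$ for a pair at distance $\delta$, and then uses $\delta \ge d$ together with the monotonicity of $\delta \mapsto \binom{n}{\ell} - \binom{n-\delta}{\ell}$); so $d^*$ really is the hamming distance of $\mathcal{R}^*$ and Theorem~\ref{t1.2} applies verbatim. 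One should also note in passing that $d^* > 0$ since $d > 0$ (the encoding $e$ is a bijection, so distinct messages give distinct codewords), which is the positivity hypothesis required by Theorem~\ref{t1.2}. With those remarks in place the proof is complete.
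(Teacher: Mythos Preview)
Your proof is correct and matches the paper's own argument essentially verbatim: the paper also derives the result as an immediate consequence of Theorem~\ref{t1.2} and Lemma~\ref{l1.3} via the same one-line simplification of $1 - d^*/(2\gamma)$. Your additional remarks on monotonicity and the positivity of $d^*$ are sound and, if anything, make the argument slightly more careful than the paper's version.
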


\begin{proof}This is an immediate consequence of Theorem \ref{t1.2} and Lemma \ref{l1.3},
once we verify that
\[ 1 - \frac{d^*}{2\gamma} = 1 - \frac{\binom{n}{\ell} - \binom{n-d}{\ell}}{2\binom{n}{\ell}}
= \frac{1}{2} + \frac{\binom{n-d}{\ell}}{2\binom{n}{\ell}}.\]
\end{proof}

\noindent{\bf Remark:} When we set $\ell = 1$ in Corollary
\ref{c1.4}, we obtain Theorem \ref{t1.1}.

\subsection{Linear Combination   Scheme}

In this subsection, we consider the \textsf{Linear Combination  Scheme}, in which
a response consists of a specified linear combination of  message blocks
(this could perhaps be thought of as a keyless analogue of the Shacham-Waters scheme \cite{SW}).
The scheme is presented in Figure \ref{fig-linear}.
We will study two versions of the scheme:
\begin{description}
\item[Version 1] \mbox{\quad} \\ Here, the challenge $V$  is
any non-zero vector in  $(\eff_q)^n$, so $\gamma = q^n-1$.
\item[Version 2] \mbox{\quad} \\ In this version of the scheme, the challenge $V$ is
a vector in  $(\eff_q)^n$ having hamming weight equal to $\ell$, so $\gamma = \binom{n}{\ell}(q-1)^{\ell}$.
\end{description}


\begin{figure}[tb]
\caption{\textsf{Linear Combination   Scheme}}\label{fig-linear}
\begin{center}
\fbox{
    \begin{minipage}{15cm}
\begin{description}
\item [challenge] \mbox{\quad} \\ A challenge is an
$n$-tuple $V = (v_1, \dots , v_n) \in (\eff_q)^n$.
\item [response] \mbox{\quad} \\
Given the challenge $V = (v_1, \dots , v_n)$,
the correct response is \[ \rho(M,V) = V \cdot M = \sum_{i=1}^n v_im_i,\]
where $M = (m_1, \dots , m_n)$ and the computation is performed in $\eff_q$.
{Suppose the \Ver \ receives a 
response $r \in \eff_q$. He then checks that 
$r = V \cdot M$.}
\end{description}
In this scheme,  $\Delta = \eff_q$.
\end{minipage}
  }
\end{center}
\end{figure}

\subsubsection{Analysis of Version 1}

\begin{lemma}
\label{l1.6v1}
Suppose that the hamming distance of $\mathcal{M}^*$ is $d$.
The hamming distance of the response code of version 1 of the \textsf{Linear Combination   Scheme}
is
$d^* =  q^n - q^{n-1} - 1$.
\end{lemma}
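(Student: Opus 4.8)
The plan is to compute, for two distinct encoded messages $M, M' \in \mathcal{M}^*$, the Hamming distance between their response vectors $r^M$ and $r^{M'}$ in version 1 of the scheme, and then minimise over all such pairs. A challenge $V$ contributes a coordinate of agreement exactly when $V \cdot M = V \cdot M'$, i.e. when $V \cdot (M - M') = 0$. So the number of coordinates on which $r^M$ and $r^{M'}$ agree equals the number of \emph{nonzero} vectors $V \in (\eff_q)^n$ lying in the hyperplane $(M-M')^{\perp}$, since $V$ ranges over all nonzero $n$-tuples.

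The key observation is that this count depends only on the fact that $M - M'$ is a fixed nonzero vector, not on its weight: for any nonzero $w \in (\eff_q)^n$, the set $\{V : V \cdot w = 0\}$ is a subspace of dimension $n-1$, hence has exactly $q^{n-1}$ elements, one of which is the zero vector. Therefore $r^M$ and $r^{M'}$ agree in exactly $q^{n-1} - 1$ coordinates out of $\gamma = q^n - 1$, so $\dist(r^M, r^{M'}) = (q^n - 1) - (q^{n-1} - 1) = q^n - q^{n-1}$. Crucially this value is the same for \emph{every} pair of distinct codewords, so the minimum distance of the response code is exactly $q^n - q^{n-1}$.

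I notice a discrepancy: the lemma as stated asserts $d^* = q^n - q^{n-1} - 1$, whereas the computation above gives $q^n - q^{n-1}$. The resolution is almost certainly a convention about whether the all-zeros challenge vector $V = \mathbf{0}$ is included in $\Gamma$; if it is (giving $\gamma = q^n$ rather than $q^n - 1$), then it always produces agreement and the distance becomes $q^n - q^{n-1}$ anyway, so that does not explain it either. More likely the intended convention excludes one additional degenerate case, or the response is taken over an affine rather than linear family; in writing the proof I would follow the paper's stated $\gamma = q^n - 1$ and present whichever of $q^n - q^{n-1}$ or $q^n - q^{n-1} - 1$ is consistent with it, flagging the off-by-one as a matter of the challenge-space convention. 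The \textbf{main obstacle} is therefore not the mathematics — which is a one-line linear-algebra fact about hyperplanes — but pinning down exactly which vectors are admissible challenges so that the constant in the final answer matches the claimed formula; once that convention is fixed, the proof is immediate and, unlike Lemma \ref{l1.3}, requires no appeal to the weight distribution of $\mathcal{M}^*$ beyond $d > 0$, reflecting the fact that the linear-combination response code ``equalises'' all nonzero differences.
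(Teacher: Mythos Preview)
Your approach is essentially identical to the paper's: both arguments reduce to counting the nonzero $V$ in the hyperplane $(M-M')^{\perp}$. The paper writes ``there are $q^{n-1}$ solutions for $V$'' and then subtracts this from $\gamma = q^n - 1$ to obtain $d^* = q^n - q^{n-1} - 1$.

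Your computation, however, is the correct one. The $q^{n-1}$ solutions to $V\cdot(M-M')=0$ include $V=\mathbf{0}$, which is explicitly excluded from $\Gamma$ in version~1. Hence only $q^{n-1}-1$ challenges in $\Gamma$ give agreement, and $\dist(r^M,r^{M'}) = (q^n-1)-(q^{n-1}-1)=q^n-q^{n-1}$. The paper's stated value $q^n-q^{n-1}-1$ arises from subtracting the full hyperplane count $q^{n-1}$ (including $\mathbf{0}$) from the challenge count $q^n-1$ (excluding $\mathbf{0}$), which double-removes the zero vector. So the discrepancy you flagged is a genuine off-by-one slip in the paper, not a matter of convention on your side; with $\gamma=q^n-1$ as stated, the minimum distance of the response code is $q^n-q^{n-1}$. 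Your observation that this value is independent of $d$ (beyond $d>0$) matches the paper's remark immediately following the lemma.
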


\begin{proof}
Suppose $M, M' \in \mathcal{M}^*$ and $M \neq M'$.
Then $V \cdot M  = V \cdot M'$ if and only if $V \cdot (M - M') = 0$.
Since $M \neq M'$, there are $q^{n-1}$ solutions for $V$, given $M$ and $M'$.
There are $q^n - 1$ choices for $V$, so the desired result follows.
\end{proof}

We observe that $d^*$ is independent of $d$ (the hamming distance of $\mathcal{M}^*$)
in version 1 of the  scheme.

\begin{theorem}
Suppose that $\mathcal{P}$ is a proving algorithm for
version 1 of the \textsf{Linear Combination   Scheme} for which
\[ \mathsf{succ}(\mathcal{P}) > \frac{1}{2} + \frac{1}{2} \left( \frac{q^{n-1}}{q^n-1} \right).\]
Then the \Ext \ presented in Figure \ref{fig2} will always output $\widehat{m} = m$.
\end{theorem}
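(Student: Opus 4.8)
The plan is to derive this statement as an immediate corollary of Theorem~\ref{t1.2} together with Lemma~\ref{l1.6v1}, exactly in the style of the proof of Theorem~\ref{c1.4}. Theorem~\ref{t1.2} guarantees that the \Ext{} of Figure~\ref{fig2} always recovers $m$ whenever $\mathsf{succ}(\mathcal{P}) > 1 - d^*/(2\gamma)$, where $d^*$ is the hamming distance of the response code and $\gamma = |\Gamma|$. For version~1 of the \textsf{Linear Combination Scheme} we have $\gamma = q^n - 1$, and Lemma~\ref{l1.6v1} tells us that $d^* = q^n - q^{n-1} - 1$, independent of $d$. So the entire content of the proof is to substitute these two values into the bound $1 - d^*/(2\gamma)$ and check that the resulting expression coincides with the threshold stated in the theorem.

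The one computation to carry out is the algebraic simplification
\[
1 - \frac{q^n - q^{n-1} - 1}{2(q^n-1)} = \frac{2(q^n-1) - (q^n - q^{n-1} - 1)}{2(q^n-1)} = \frac{q^n + q^{n-1} - 1}{2(q^n-1)} = \frac{1}{2} + \frac{1}{2}\left(\frac{q^{n-1}}{q^n-1}\right),
\]
which shows that the hypothesis $\mathsf{succ}(\mathcal{P}) > \tfrac{1}{2} + \tfrac{1}{2}\big(q^{n-1}/(q^n-1)\big)$ is precisely the hypothesis $\mathsf{succ}(\mathcal{P}) > 1 - d^*/(2\gamma)$ required by Theorem~\ref{t1.2}. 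Once this identity is in place, the conclusion follows directly.

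There is no real obstacle here: the only mild point worth a sentence of comment is that Lemma~\ref{l1.6v1} applies because the mapping $M \mapsto r^M$ is an injection (so $d^* > 0$), which is part of the standing assumption on the \textsf{Generalised Scheme}, and because $d^*$ turns out not to depend on $d$ the bound is uniform over all choices of encoding $\mathcal{M}^*$. Thus I would simply write: ``This is an immediate consequence of Theorem~\ref{t1.2} and Lemma~\ref{l1.6v1}, once we verify the displayed identity above,'' mirroring the proof of Theorem~\ref{c1.4}.
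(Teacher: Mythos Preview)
Your proposal is correct and matches the paper's approach: the paper states this theorem without proof, treating it as an immediate consequence of Theorem~\ref{t1.2} and Lemma~\ref{l1.6v1}, and your algebraic verification that $1 - d^*/(2\gamma) = \tfrac{1}{2} + \tfrac{1}{2}\,q^{n-1}/(q^n-1)$ is exactly the missing one-line check.
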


\subsubsection{Analysis of Version 2}

\begin{lemma}
\label{l1.5}
Suppose that $r \geq 1$ and $X \in (\eff_q)^r$ has hamming weight equal to $r$.
Then the number of solutions {$V \in (\eff_q)^r$} to the equation $V \cdot X = 0$ in which
$V$ has hamming weight equal to $r$, which we denote by $a_r$, is given by the formula
\begin{equation}
\label{eq1}
a_r = \frac{q-1}{q}((q-1)^{r-1} - (-1)^{r-1}).
\end{equation}
\end{lemma}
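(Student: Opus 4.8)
The plan is first to strip the vector $X$ out of the problem entirely. Since $X = (x_1, \dots , x_r)$ has Hamming weight $r$, every coordinate $x_i$ is a nonzero element of $\eff_q$, so the coordinatewise rescaling $w_i = x_i v_i$ is a bijection that sends a vector $V \in (\eff_q)^r$ of Hamming weight $r$ to a vector $W = (w_1, \dots, w_r)$ of Hamming weight $r$ (this is exactly where the full-weight hypothesis on $X$ is used), and under it the condition $V \cdot X = 0$ becomes $w_1 + \cdots + w_r = 0$. Hence $a_r$ equals the number $N_r$ of tuples $(w_1, \dots, w_r)$ of nonzero elements of $\eff_q$ summing to zero, and it suffices to show $N_r = \frac{q-1}{q}\big((q-1)^{r-1} - (-1)^{r-1}\big)$.

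Next I would set up a one-term recurrence for $N_r$. For $r \geq 2$, each tuple $(w_1, \dots, w_{r-1})$ of nonzero elements determines a unique $w_r = -(w_1 + \cdots + w_{r-1})$ making the sum vanish, and $w_r$ is nonzero precisely when $w_1 + \cdots + w_{r-1} \neq 0$; the number of $(r-1)$-tuples of nonzero elements with nonzero sum is $(q-1)^{r-1} - N_{r-1}$, so $N_r = (q-1)^{r-1} - N_{r-1}$, with base case $N_1 = 0$.

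Finally I would solve this linear recurrence in the usual way: a particular solution of the form $A(q-1)^r$ forces $A = 1/q$, the homogeneous solution contributes $C(-1)^r$, and imposing $N_1 = 0$ gives $C = (q-1)/q$, so that
\[ N_r = \frac{(q-1)^r}{q} + \frac{q-1}{q}(-1)^r = \frac{q-1}{q}\big((q-1)^{r-1} - (-1)^{r-1}\big), \]
as claimed. Alternatively, one can bypass the recurrence and evaluate $N_r$ directly by orthogonality of additive characters: fixing a nontrivial additive character $\psi$ of $\eff_q$, one has $N_r = \frac{1}{q}\sum_{t \in \eff_q}\big(\sum_{w \in \eff_q \setminus \{0\}} \psi(tw)\big)^r$, where the inner sum equals $q-1$ for $t = 0$ and $-1$ for $t \neq 0$, which yields the closed form immediately.

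There is no genuine obstacle here; the argument is elementary throughout. The only places meriting care are verifying that the rescaling $w_i = x_i v_i$ preserves the ``weight $r$'' condition coordinate by coordinate, and pinning down the base case so that the sign of the $(-1)^{r-1}$ term is correct — a quick check at $r = 1$ (where $a_1 = 0$, since $x_1 v_1 = 0$ is impossible) and $r = 2$ (where $a_2 = q - 1$) confirms this.
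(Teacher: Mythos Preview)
Your argument is correct and rests on the same recurrence $a_r = (q-1)^{r-1} - a_{r-1}$ that the paper derives, so the core approach is essentially the same; the main cosmetic difference is that you first rescale by $w_i = x_i v_i$ to reduce to the case $X = (1,\dots,1)$ and then solve the recurrence directly, whereas the paper keeps $X$ in place and verifies the closed form by induction. Your character-sum alternative is a different route from the paper's alternative (which identifies $a_r$ with the number of weight-$r$ codewords in an $[r,r-1,2]$ MDS code and appeals to the MDS weight-distribution formula), but both give the same closed form immediately.
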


\begin{proof}
We prove the result by induction on $r$. When $r =1$, there
are no solutions, so $a_1 = 0$, agreeing with (\ref{eq1}).
Now assume that (\ref{eq1}) gives the number of solutions for $r= s-1$,
and consider $r=s$. Let $X = (x_1, \dots  , x_s)$ and define
$X' = (x_1, \dots  , x_{s-1})$. By induction, the number of solutions to the equation
$V' \cdot X' = 0$ in which $V'$ has hamming weight $s-1$ is $a_{s-1}$.
Each of these solutions $V'$ can be extended to a solution of the equation
$V \cdot X = 0$ by setting $v_s =0$; in each case, the resulting $V$ has hamming weight
equal to $s-1$. However, any other vector $V'$ of weight $s-1$ can be extended to
a solution of the equation
$V \cdot X = 0$ which has hamming weight equal to $s$.
Therefore, we have
\begin{eqnarray*}
a_s &=& (q-1)^{s-1} - a_{s-1}\\
&=&
(q-1)^{s-1} - \left(\frac{q-1}{q}((q-1)^{s-2} - (-1)^{s-2})\right)\\
&=& (q-1)^{s-1} \left( 1 - \frac{1}{q} \right) +  \frac{q-1}{q}(-1)^{s-2} \\
&=&  \frac{q-1}{q}((q-1)^{s-1} - (-1)^{s-1}).
\end{eqnarray*}
\end{proof}

\noindent{\em Remark.}
An alternative way to prove (\ref{eq1}) is to observe that
$a_r$ is equal to the number of codewords of weight $r$ in an
MDS code having length $r$, dimension $r-1$ and distance $2$. Then (\ref{eq1}) can be
derived from well-known formulas for the weight distribution of an MDS code.
For example, in \cite[Ch.\ 6, Theorem 6]{MS}, it is shown that the
number of codewords of weight $w$ in an MDS code of length $n$, dimension
$k$ and distance $d=n-k+1$ over $\eff_q$ is
\[ a_w = \binom{n}{w}(q-1)\sum_{j=0}^{w-d} (-1)^j \binom{w-1}{j} q^{w-d-1} .\]
If we substitute $n=w=r$, $d=2$ into this formula, we get
\begin{eqnarray*}
a_r &=&  \binom{r}{r}(q-1)\sum_{j=0}^{r-2} (-1)^j \binom{r-1}{j} q^{r-2-j}\\
&=& \frac{q-1}{q} \sum_{j=0}^{r-2} \binom{r-1}{j} (-1)^j  q^{r-1-j}\\
&=& \frac{q-1}{q} \left( \sum_{j=0}^{r-1} \binom{r-1}{j} (-1)^j  q^{r-1-j}
- (-1)^{r-1} \right) \\
&=& \frac{q-1}{q} \left( (q-1)^{r-1} - (-1)^{r-1} \right) ,
\end{eqnarray*}
 which agrees with (\ref{eq1}), as proven in Lemma \ref{l1.5}.

 \medskip

 We will now compute the distance $d^*$ of the response code $\mathcal{M}^*$.
Here is a lemma that will be of use in computing $d^*$.

\begin{lemma}
\label{l1.6}
Suppose that $M, M' \in \mathcal{M}^*$ and $M \neq M'$.  Denote $\delta = \dist (M,M')$.
Let $r^M$ and $r^{M'}$ be the corresponding vectors in the response code of
Version 2 of the \textsf{Linear Combination   Scheme}.
Then
\begin{equation}
\label{eq2}
\dist (r^M, r^{M'}) =  (q-1)^{\ell} \left( \binom{n}{\ell}  - \binom{n-\delta}{\ell} \right)  -
\sum_{w \geq 1} \binom{\delta}{w} \binom{n-\delta}{\ell - w} (q-1)^{\ell-w}a_w,
\end{equation}
where the $a_w$'s are given by (\ref{eq1}).
\end{lemma}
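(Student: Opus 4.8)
The plan is to convert $\dist(r^M, r^{M'})$ into a counting problem about challenge vectors and then evaluate it by conditioning on how the support of a challenge meets the support of $M - M'$. Write $D = M - M' \in (\eff_q)^n$ and let $S \subseteq \{1,\dots,n\}$ be the set of coordinates on which $D$ is nonzero, so that $|S| = \delta$. Since the $V$-coordinates of $r^M$ and $r^{M'}$ are $V \cdot M$ and $V \cdot M'$, such a coordinate contributes to $\dist(r^M, r^{M'})$ exactly when $V \cdot D \neq 0$. Hence $\dist(r^M, r^{M'})$ is the number of challenges $V$ --- i.e., vectors in $(\eff_q)^n$ of hamming weight exactly $\ell$ --- with $V \cdot D \neq 0$.

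Next I would stratify these challenges according to $w = |\,\mathrm{supp}(V) \cap S\,|$, where $\mathrm{supp}(V)$ denotes the set of coordinates on which $V$ is nonzero. For a fixed $w$, a weight-$\ell$ challenge $V$ with $|\mathrm{supp}(V)\cap S| = w$ is built from: a choice of which $w$ of its support coordinates lie in $S$ ($\binom{\delta}{w}$ ways); a choice of the remaining $\ell - w$ support coordinates, which necessarily lie outside $S$ ($\binom{n-\delta}{\ell-w}$ ways); arbitrary nonzero entries on those $\ell - w$ coordinates outside $S$ ($(q-1)^{\ell-w}$ ways); and nonzero entries on the $w$ coordinates inside $S$. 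Only this last group affects the inner product: if $\mathrm{supp}(V)\cap S = \{i_1,\dots,i_w\}$ then $V \cdot D = \sum_{j=1}^{w} v_{i_j} d_{i_j}$, because $V$ vanishes off its support and $D$ vanishes off $S$. As $i_1,\dots,i_w \in S$, the length-$w$ vector $(d_{i_1},\dots,d_{i_w})$ has hamming weight $w$, so Lemma \ref{l1.5} (applied with $r = w$) says that the number of weight-$w$ choices $(v_{i_1},\dots,v_{i_w})$ making this sum zero is $a_w$, regardless of the actual field elements $d_{i_j}$; therefore the number making $V \cdot D \neq 0$ is $(q-1)^w - a_w$. (The stratum $w = 0$ gives $V \cdot D = 0$ and so contributes nothing.)

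Multiplying the counts within each stratum and summing yields
\[ \dist(r^M, r^{M'}) = \sum_{w \ge 1} \binom{\delta}{w}\binom{n-\delta}{\ell - w}(q-1)^{\ell-w}\big((q-1)^w - a_w\big). \]
To match (\ref{eq2}) it then remains to simplify the contribution of the $(q-1)^w$ term: it equals $(q-1)^{\ell}\sum_{w\ge1}\binom{\delta}{w}\binom{n-\delta}{\ell-w}$, and the Vandermonde identity $\sum_{w\ge0}\binom{\delta}{w}\binom{n-\delta}{\ell-w} = \binom{n}{\ell}$ shows the sum over $w \ge 1$ equals $\binom{n}{\ell} - \binom{n-\delta}{\ell}$. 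Substituting this in gives exactly the claimed formula (\ref{eq2}). Throughout I use the convention $\binom{a}{b} = 0$ for $b < 0$ or $b > a$, which automatically enforces the constraints $w \le \delta$ and $\ell - w \le n - \delta$.

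There is no deep difficulty in this argument --- it is essentially careful bookkeeping --- but the step that must be handled with care is the appeal to Lemma \ref{l1.5}: one has to observe that the restriction of $D$ to any $w$-element subset of $S$ has full weight $w$ (which is immediate from the definition of $S$), so that Lemma \ref{l1.5} applies verbatim and the count $a_w$ is genuinely independent of which nonzero values $D$ assumes on $S$. A minor related point is to confirm that the $w = 0$ stratum is correctly omitted (those challenges are orthogonal to $D$) and that the Vandermonde simplification is precisely accounting for the missing $w = 0$ term $\binom{n-\delta}{\ell}$.
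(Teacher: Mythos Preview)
Your proof is correct and follows essentially the same approach as the paper: both stratify the challenges by $w = |\mathrm{supp}(V) \cap \mathrm{supp}(M-M')|$ and invoke Lemma~\ref{l1.5} to count the contributions for each $w$. The only cosmetic difference is that the paper counts the challenges with $V\cdot M = V\cdot M'$ and subtracts from $\gamma = \binom{n}{\ell}(q-1)^{\ell}$, whereas you count the challenges with $V\cdot D \neq 0$ directly and then need the Vandermonde identity to reach the form~(\ref{eq2}).
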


\begin{proof}
Suppose that $M, M' \in \mathcal{M}^*$ and $M \neq M'$.
Denote
\[ M = (m_1, \dots , m_n) \text{ and } M' = (m'_1, \dots , m'_n),\]
and let $\delta = \dist (M,M')$.
Let \[ J = \{ i : m_i = m'_i\} \text{ and }J' = \{1, \dots , n\} \setminus J.\]
Observe that $|J| = n-\delta$ and $|J'| = \delta$.
For any $V = (v_1, \dots, v_n)$ having hamming weight equal to $\ell$, define \[J_V =
\{j \in J : v_j \neq 0\} \text{ and }J'_V =
\{j \in J' : v_j \neq 0\}.\]
Denote $w = |J'_V|$; then $|J_V| = \ell - w$.

Suppose $w \geq 1$. Then, given $J_V$ and $J'_V$, the number of solutions to
the equation $V \cdot M = V \cdot M'$ is precisely $(q-1)^{\ell-w}a_w$.
When $w=0$, the number of solutions is $(q-1)^{\ell}$.
Summing over $w$, and considering all possible choices for $J_V$ and $J'_V$,
we see that the total number of solutions to
the equation $V \cdot M = V \cdot M'$ is
\begin{equation}
\label{lc.eq}   \binom{n-\delta}{\ell} (q-1)^{\ell} +
\sum_{w \geq 1} \binom{\delta}{w} \binom{n-\delta}{\ell - w} (q-1)^{\ell-w}a_w.
\end{equation}
The desired result follows.
\end{proof}

We can obtain a very accurate estimate for $d^*$ by observing that
\[ a_w \approx \frac{(q-1)^w}{q}\] is a very accurate approximation.
After making this substitution, it is easy to see that
the quantity (\ref{lc.eq}) is  minimised when $\delta = d$. This
minimises (\ref{eq2}), so we obtain
\begin{equation}
\label{lc.eq2}
 d^* \approx \binom{n}{\ell} (q-1)^{\ell} - \binom{n-d}{\ell} (q-1)^{\ell} -
\sum_{w \geq 1} \binom{d}{w} \binom{n-d}{\ell - w} \frac{(q-1)^{\ell}}{q}.
\end{equation}
We have
\begin{eqnarray*}
 \sum_{w \geq 1} \binom{d}{w} \binom{n-d}{\ell - w} \frac{(q-1)^{\ell}}{q}
& = & \frac{(q-1)^{\ell}}{q} \sum_{w \geq 1} \binom{d}{w} \binom{n-d}{\ell - w} \\
& = & \frac{(q-1)^{\ell}}{q} \left( \binom{n}{\ell} -  \binom{n-d}{\ell} \right).
\end{eqnarray*}
Therefore, from (\ref{lc.eq2}), we get
\begin{align}
 d^* & \approx  (q-1)^{\ell} \left( \binom{n}{\ell}  - \binom{n-d}{\ell} \right)
- \frac{(q-1)^{\ell}}{q} \left( \binom{n}{\ell} -  \binom{n-d}{\ell} \right)\nonumber \\
\label{estimate} & =  \frac{(q-1)^{\ell + 1}}{q} \left( \binom{n}{\ell} -  \binom{n-d}{\ell} \right).
\end{align}

The following theorem uses the estimated value for $d^*$ derived in
(\ref{estimate}).

\begin{theorem}
\label{LCv2.thm}
Suppose that $\mathcal{P}$ is a proving algorithm for
version 2 of the \textsf{Linear Combination  Scheme} for which
\[ \mathsf{succ}(\mathcal{P}) > \frac{1}{2} + \frac{1}{2} \left(
\frac{1}{q} + \frac{(q-1)\binom{n-d}{\ell}}{q\binom{n}{\ell}} \right),\]
where the hamming distance of $\mathcal{M}^*$ is $d$.
Then the \Ext \ presented in Figure \ref{fig2} will always output $\widehat{m} = m$.
\end{theorem}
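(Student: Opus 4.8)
The plan is to derive Theorem~\ref{LCv2.thm} as an immediate corollary of Theorem~\ref{t1.2}, exactly in the same way that Theorem~\ref{c1.4} followed from Theorem~\ref{t1.2} and Lemma~\ref{l1.3}. The general reduction in Theorem~\ref{t1.2} says that extraction always succeeds provided $\mathsf{succ}(\mathcal{P}) > 1 - d^*/(2\gamma)$, where $d^*$ is the Hamming distance of the response code and $\gamma = |\Gamma|$. For Version~2 of the \textsf{Linear Combination Scheme} we have $\gamma = \binom{n}{\ell}(q-1)^{\ell}$, and the (estimated) value of $d^*$ is the one computed in~(\ref{estimate}), namely
\[
d^* \approx \frac{(q-1)^{\ell+1}}{q}\left( \binom{n}{\ell} - \binom{n-d}{\ell} \right).
\]
So the proof is essentially a one-line substitution followed by algebraic simplification of $1 - d^*/(2\gamma)$.

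First I would write down $1 - d^*/(2\gamma)$ with the above values plugged in:
\[
1 - \frac{d^*}{2\gamma} \approx 1 - \frac{(q-1)^{\ell+1}\left( \binom{n}{\ell} - \binom{n-d}{\ell} \right)}{2q\binom{n}{\ell}(q-1)^{\ell}} = 1 - \frac{(q-1)\left( \binom{n}{\ell} - \binom{n-d}{\ell} \right)}{2q\binom{n}{\ell}}.
\]
Then I would split the numerator, writing $\binom{n}{\ell} - \binom{n-d}{\ell} = \binom{n}{\ell}\left(1 - \binom{n-d}{\ell}/\binom{n}{\ell}\right)$, and simplify the leading terms. The constant part gives $1 - \tfrac{q-1}{2q} = \tfrac{1}{2} + \tfrac{1}{2q}$, and the remaining part contributes $+\tfrac{(q-1)\binom{n-d}{\ell}}{2q\binom{n}{\ell}}$, yielding exactly
\[
1 - \frac{d^*}{2\gamma} \approx \frac{1}{2} + \frac{1}{2}\left( \frac{1}{q} + \frac{(q-1)\binom{n-d}{\ell}}{q\binom{n}{\ell}} \right),
\]
which is the threshold in the statement. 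Invoking Theorem~\ref{t1.2} then finishes the argument.

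I do not expect any genuine obstacle here; the only subtlety worth flagging is that this theorem uses the \emph{approximate} value of $d^*$ from~(\ref{estimate}) rather than the exact value from Lemma~\ref{l1.6}, so strictly speaking the conclusion inherits that approximation. I would simply note that the bound holds with the estimated $d^*$, or alternatively remark that the approximation $a_w \approx (q-1)^w/q$ used to pass from~(\ref{eq2}) to~(\ref{estimate}) is extremely tight (the error terms being geometrically small in $q$), so the stated threshold is correct up to a negligible correction. If one wanted an exact statement instead, one would carry the exact $d^*$ of Lemma~\ref{l1.6} through Theorem~\ref{t1.2} without simplification, but the resulting expression is unwieldy, which is presumably why the authors opt for the clean approximate form.
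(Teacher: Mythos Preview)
Your proposal is correct and matches the paper's approach exactly: the paper states the theorem immediately after the sentence ``The following theorem uses the estimated value for $d^*$ derived in (\ref{estimate}),'' and gives no separate proof, so the intended argument is precisely the substitution of the approximate $d^*$ into Theorem~\ref{t1.2} followed by the algebraic simplification you carry out. Your remark about the approximation inherited from (\ref{estimate}) is also appropriate and consistent with how the paper treats the result.
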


\section{Analysis of a Keyed Scheme: the Shacham-Waters Scheme}
\label{keyed.sec}

The \textsf{Shacham-Waters Scheme} \cite{SW} is a {\it keyed proof-of-retrievability scheme}.
This means that the
\Ver \ has a secret key that is not provided to the \Pro. This key is used to verify responses
in the challenge-response protocol, and it is also provided to
an extraction algorithm as input.
The use of a key permits an arbitrary number of challenges to be verified, without
the \Ver \ having to precompute the responses.

We discuss a variation of the \textsf{Shacham-Waters Scheme} \cite{SW},
modified to fit the unconditional security
setting. The main change is that the vector $(\beta_1, \dots , \beta_n)$ (which comprises
part of the key) is completely random,
rather than being generated by a pseudorandom function (i.e., a PRF).
This scheme, presented in Figure \ref{fig-SW}, is termed  the \textsf{Modified Shacham-Waters Scheme}.

\begin{figure}[tb]
\caption{\textsf{Modified Shacham-Waters Scheme}}\label{fig-SW}
\begin{center}
\fbox{
    \begin{minipage}{15cm}
\begin{itemize}
\item The {\it key} $K$ consists of $\alpha \in \eff_q$ and $B = (\beta_1, \dots , \beta_n) \in (\eff_q)^n$.
$K$ is retained by the \Ver.
\item The {\it encoded message} is $M = (m_1, \dots , m_n) \in (\eff_q)^n$.
\item The {\it tag} is $S = (\sigma_1, \dots , \sigma_n) \in (\eff_q)^n$, where
$S$ is computed using the following (vector) equation in $\eff_q$:
\begin{equation}
\label{tag.eq} S = B + \alpha M.
\end{equation}
The message $M$ and the tag $S$ are given to the \Pro.
\item A {\it challenge} is a
vector 
 $V=(v_1, \dots , v_n) \in (\eff_q)^n$.
\item The {\it response} consists of $(\mu,\tau) \in (\eff_q)^2$, where
the following computations are performed in $\eff_q$:
\begin{equation}
\label{mu.eq} \mu = V \cdot M
\end{equation}
and
\begin{equation}
\label{tau.eq} \tau = V \cdot S.
\end{equation}
\item The response $(\mu,\tau)$ is {\it verified} by checking that the following
condition holds in $\eff_q$:
\begin{equation}
\label{ver.eq} \tau \stackrel{?}{=} \alpha \mu + V \cdot B.
\end{equation}
\end{itemize}
\end{minipage}
  }
\end{center}
\end{figure}

As we did  with the \textsf{Linear Combination    Scheme}, we will study two versions of the scheme.
In the first version, the challenge $V$ is
any non-zero vector in  $(\eff_q)^n$, so $\gamma = q^n-1$.
In the second version, the challenge $V$ is
a vector in  $(\eff_q)^n$ having hamming weight equal to $\ell$, so $\gamma = \binom{n}{\ell}(q-1)^{\ell}$.
In both versions, $\Delta = (\eff_q)^2$.

The information held by the various parties in the scheme is
summarised as follows:

\begin{center}
\begin{tabular}{c|c|c}
\Ver & \Pro & \Ext \\ \hline
$K= (\alpha, B)$ & $M, S$ & $K, \mathcal{P}$
\end{tabular}
\end{center}

We first observe that, from the point of view of the \Pro, there are $q$ possible
keys.
\begin{lemma}
\label{alpha.lem}
Given $M$ and $S$, the \Pro \ can restrict the set of possible keys $(\alpha, B)$ to
\[ \Pos(M,S) = \{ (\alpha_0, S - \alpha_0 M) : \alpha_0 \in \eff_q\} .\]
\end{lemma}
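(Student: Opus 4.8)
The statement to prove is Lemma~\ref{alpha.lem}: given $M$ and $S$, the \Pro\ can restrict the set of possible keys $(\alpha,B)$ to $\Pos(M,S) = \{(\alpha_0, S - \alpha_0 M) : \alpha_0 \in \eff_q\}$.

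This is a direct consequence of the tag equation~(\ref{tag.eq}), $S = B + \alpha M$. The plan is to argue set equality in both directions. First I would show that the true key $(\alpha, B)$ necessarily lies in $\Pos(M,S)$: rearranging~(\ref{tag.eq}) gives $B = S - \alpha M$, so setting $\alpha_0 = \alpha$ exhibits $(\alpha, B)$ as a member of the claimed set. Conversely, I would observe that every pair $(\alpha_0, S - \alpha_0 M)$ with $\alpha_0 \in \eff_q$ is \emph{consistent} with the \Pro's view $(M,S)$, in the sense that if the key had been $(\alpha_0, S - \alpha_0 M)$ then the tag computed via~(\ref{tag.eq}) would be $(S - \alpha_0 M) + \alpha_0 M = S$, exactly what the \Pro\ received. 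Hence the \Pro\ cannot rule out any such pair, and $\Pos(M,S)$ is precisely the set of keys consistent with the information available. Finally I would note that since $\alpha_0$ ranges over $\eff_q$ and distinct values of $\alpha_0$ give distinct pairs (as $M \neq 0$ in any nontrivial scheme, so $S - \alpha_0 M$ determines $\alpha_0$), there are exactly $q$ such keys, matching the earlier observation in the text.

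There is no real obstacle here; the lemma is essentially a restatement of~(\ref{tag.eq}) solved for $B$ in terms of $\alpha$. The only subtlety worth a sentence is clarifying what "possible" means: it is the set of keys under which the \Pro\ would have observed exactly the data $(M,S)$ it was given, i.e.\ the fibre of the map $(\alpha, B) \mapsto (M, B + \alpha M)$ over the observed value (for fixed $M$). Making that interpretation explicit is the main thing to get right in the write-up, after which the two inclusions are immediate.
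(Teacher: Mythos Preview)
Your proposal is correct and follows the same approach as the paper, which simply observes that equation~(\ref{tag.eq}) forces $B = S - \alpha_0 M$ once $\alpha = \alpha_0$ is fixed; you have merely spelled out both inclusions and the meaning of ``possible'' more explicitly than the paper's one-line proof. One small remark: your parenthetical justification that distinct $\alpha_0$ give distinct pairs because $M \neq 0$ is unnecessary, since the pairs already differ in their first coordinate $\alpha_0$ regardless of $M$.
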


\begin{proof}
Suppose that $\alpha = \alpha_0$. Then
equation (\ref{tag.eq}) implies that $B = S - \alpha_0 M$.
\end{proof}

We will define a response $(\mu,\tau)$ to be {\it acceptable} if (\ref{ver.eq}) is satisfied.
A response is {\it authentic} if it was created using equations (\ref{mu.eq}) and (\ref{tau.eq}).
Note that an authentic response will be acceptable for every key $K  \in \Pos(M,S)$.
In the case of an acceptable (but perhaps not authentic) response, we have the following useful
lemma.

\begin{lemma}
\label{acceptable.lem}
Suppose that a response $(\mu,\tau)$ to a challenge $V$ for a message $M$ is acceptable for
more than one key in $\Pos(M,S)$. Then $(\mu,\tau)$ is authentic.
\end{lemma}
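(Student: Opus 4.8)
The plan is to show directly that acceptability for two distinct keys in $\Pos(M,S)$ forces $(\mu,\tau)$ to coincide with the authentic response $(V\cdot M, V\cdot S)$. First I would use Lemma \ref{alpha.lem} to parametrise the keys: a key in $\Pos(M,S)$ has the form $(\alpha_0, S-\alpha_0 M)$ for some $\alpha_0 \in \eff_q$. Suppose $(\mu,\tau)$ is acceptable for two such keys, corresponding to distinct values $\alpha_0 \neq \alpha_1$ in $\eff_q$.

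Next I would write down what acceptability means for each of these keys. Substituting $B = S - \alpha_0 M$ into the verification equation (\ref{ver.eq}) gives
\[
\tau = \alpha_0 \mu + V \cdot (S - \alpha_0 M) = \alpha_0(\mu - V\cdot M) + V\cdot S,
\]
and similarly for $\alpha_1$,
\[
\tau = \alpha_1(\mu - V\cdot M) + V\cdot S.
\]
Subtracting these two equations yields $(\alpha_0 - \alpha_1)(\mu - V\cdot M) = 0$ in $\eff_q$. Since $\alpha_0 \neq \alpha_1$ and $\eff_q$ is a field, we conclude $\mu = V\cdot M$. Plugging this back into either acceptability equation gives $\tau = V\cdot S$. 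Hence $(\mu,\tau) = (V\cdot M, V\cdot S)$, which is exactly the authentic response defined by equations (\ref{mu.eq}) and (\ref{tau.eq}).

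I do not anticipate a genuine obstacle here — the argument is a short linear-algebra manipulation over a field, and the key input (the one-parameter description of $\Pos(M,S)$) is already established in Lemma \ref{alpha.lem}. The only point requiring a little care is making sure the two acceptability conditions are set up with the \emph{same} response $(\mu,\tau)$ but \emph{different} keys, so that the difference of the two equations isolates the factor $\mu - V\cdot M$; once that is done, the nonvanishing of $\alpha_0 - \alpha_1$ does all the work. This lemma will presumably be used later to argue that a prover who wants its responses to be robust against the uncertainty in the key (i.e.\ acceptable for more than one candidate $\alpha_0$) is effectively forced to behave honestly, which is what makes the average-case analysis over consistent keys tractable.
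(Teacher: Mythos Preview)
Your proposal is correct and follows essentially the same approach as the paper: parametrise the two keys by distinct $\alpha$-values, write out the verification equation (\ref{ver.eq}) for each, subtract to isolate $(\alpha_0-\alpha_1)(\mu - V\cdot M)=0$, and conclude $\mu=V\cdot M$ and hence $\tau=V\cdot S$. The only cosmetic difference is that you substitute $B=S-\alpha_0 M$ immediately and simplify to $\tau=\alpha_0(\mu-V\cdot M)+V\cdot S$ before subtracting, whereas the paper carries $B_1,B_2$ a step longer; the underlying argument is identical.
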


\begin{proof} Suppose $K_1,K_2 \in \Pos(M,S)$, where
$K_1 = (\alpha_1,B_1)$, $K_2 = (\alpha_2,B_2)$ and $\alpha_1 \neq \alpha_2$.
We have $B_1 = S - \alpha_1 M$ and $B_2 = S - \alpha_2 M$.
Now consider a response $(\mu,\tau)$ to a challenge $V$  that is acceptable
for both of the keys $K_1$ and $K_2$. Then
\[ \tau = \alpha_1 \mu + V \cdot B_1 = \alpha_2 \mu + V \cdot B_2.\]
Therefore,
\[ (\alpha_1 - \alpha_2)\mu + V \cdot (B_1 - B_2) = 0.\]
However, we have $B_1 - B_2 = (\alpha_2 - \alpha_1)M$, so
\[ (\alpha_1 - \alpha_2)(\mu - V \cdot M) = 0.\]
We have $\alpha_1 \neq \alpha_2$, so it follows that
$\mu = V \cdot M$.
Then we obtain
\begin{eqnarray*}
\tau & = & \alpha_1 \mu + V \cdot B_1 \\
&= & \alpha_1 V \cdot M + V \cdot (S - \alpha_1 M) \\
& = & V \cdot S.
\end{eqnarray*}
Therefore the response $(\mu,\tau)$ is authentic.
\end{proof}

The verification condition (\ref{ver.eq}) depends on the key, but not on the
message $M$.
It follows that the \Pro \ can create acceptable non-authentic responses if he knows the value of $\alpha$
(which in turn uniquely determines the key, as shown in Lemma \ref{alpha.lem}).
This leads to the following attack.

\begin{theorem}
If the \Pro \ has access to a verification oracle, then the \textsf{Modified Shacham-Waters Scheme} is not unconditionally secure.
\end{theorem}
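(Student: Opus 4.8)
The plan is to exhibit an explicit adversary that uses the verification oracle to recover the secret scalar $\alpha$, hence the entire key $K$, and then to produce a proving algorithm $\mathcal{P}$ with $\mathsf{succ}(\mathcal{P}) = 1$ that nevertheless commits to a decoy message, so that no extractor --- even one given $K$ --- can reliably output the true message $m$. Throughout we assume the trivial non-degeneracy condition $|\mathcal{M}^*| \geq 2$.

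First I would carry out the key-recovery step. By Lemma~\ref{alpha.lem} the \Pro, knowing $M$ and $S$, knows that $K \in \Pos(M,S) = \{(\alpha_0, S - \alpha_0 M) : \alpha_0 \in \eff_q\}$, so it suffices to identify $\alpha$. Fix any nonzero challenge $V$, and for a candidate value $\alpha_0 \in \eff_q$ query the verification oracle on the pair $(V, (\mu,\tau))$ with $\mu = V \cdot M + 1$ and $\tau = \alpha_0 + V \cdot S$. Substituting into (\ref{ver.eq}) shows that this response is acceptable for the key $(\alpha_0, S - \alpha_0 M)$, since $\alpha_0 \mu + V \cdot (S - \alpha_0 M) = \alpha_0 (V \cdot M + 1) + V \cdot S - \alpha_0 (V \cdot M) = \alpha_0 + V \cdot S = \tau$. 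On the other hand $\mu = V \cdot M + 1 \neq V \cdot M$, so $(\mu,\tau)$ is not authentic, and Lemma~\ref{acceptable.lem} then guarantees that $(\mu,\tau)$ is acceptable for \emph{at most one} key of $\Pos(M,S)$. Hence the oracle accepts this query if and only if $\alpha = \alpha_0$. Cycling through $\alpha_0 \in \eff_q$ --- at most $q-1$ queries, all of which are permitted since the adversary is computationally unbounded --- reveals $\alpha$ with certainty, and therefore $K = (\alpha,\, S - \alpha M)$.

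Next I would complete the attack. The adversary picks $m$ (with $M = e(m)$) at the outset and also fixes some $M'' \in \mathcal{M}^* \setminus \{M\}$. After recovering $K = (\alpha,B)$ as above, it discards $M$ and $S$ and outputs the deterministic proving algorithm $\mathcal{P}$ defined by $\mathcal{P}(V) = (V \cdot M'',\ \alpha (V \cdot M'') + V \cdot B)$. Every such response satisfies the verification equation (\ref{ver.eq}) by construction, so $\mathsf{succ}(\mathcal{P}) = 1$, exceeding any conceivable extraction threshold. It remains to see that extraction nonetheless fails. Here the subtle point is that supplying $K$ to the \Ext \ is of no help: the $\tau$-component of each response is the deterministic function $\tau = \alpha \mu + V \cdot B$ of $\mu$, $V$ and $K$, so the only information the \Ext \ obtains from $\mathcal{P}$ beyond $K$ is the vector of values $\mu = V \cdot M''$. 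Moreover, conditioned on the \Ver \ holding a given key $K$, the adversary above recovers exactly that $K$ and outputs exactly the same $\mathcal{P}$ regardless of whether its chosen message was $m$ or $m'' = e^{-1}(M'')$ (in the latter run it receives a different $M$ and $S$, but it erases them). Thus the pair $(K,\mathcal{P})$ passed to the \Ext \ has the same distribution in the two cases, so the \Ext \ returns the same $\widehat{m}$ in both, and since $m \neq m''$ it is wrong in at least one of them; the scheme is not unconditionally secure.

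The main obstacle is the last argument: one must verify that the extractor's knowledge of $K$ genuinely contributes nothing, which is precisely the observation that $\tau$ is determined by $K$, $V$ and $\mu$, together with the indistinguishability of the two runs above. By comparison, the design of the oracle query is routine, although it is exactly where Lemma~\ref{acceptable.lem} is used: a non-authentic response cannot be accepted under two distinct keys, so a single acceptance determines $\alpha$.
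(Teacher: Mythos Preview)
Your argument is correct and follows the same two-phase strategy as the paper: use the verification oracle together with Lemma~\ref{acceptable.lem} to recover $\alpha$ (hence $K$), then hard-wire a decoy $M''\neq M$ into a proving algorithm whose every response passes verification. Your oracle query is more explicit than the paper's --- you write down concrete $(\mu,\tau)$ and verify directly that acceptance occurs iff $\alpha_0=\alpha$ --- but the idea is identical.

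Where your write-up genuinely goes further is the last step. The paper only argues that its \emph{specific} nearest-neighbour extractor (Figure~\ref{fig8}) outputs $M''$, since $\dist(R',r^{M''})=0$; strictly speaking that does not rule out some other extractor. You instead give an indistinguishability argument: conditioned on $K$, the pair $(K,\mathcal{P})$ handed to the extractor is identical whether the adversary's true message was $m$ (malicious run, decoy $M''$) or $m''$ (honest run, authentic responses), so no extractor can be correct in both cases. This is the cleaner and more general conclusion; it shows the scheme is insecure against \emph{every} extractor, not just the canonical one.
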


\begin{proof} 
For every key $K \in \Pos(M,S)$, the \Pro \ can create a response $(\mu,\tau)$
to a challenge $V$ that will be acceptable if and only if $K$ is the
actual key (this follows from Lemma \ref{acceptable.lem}).
The \Pro \ can check the validity of these responses by accessing the verification
oracle. As soon as one of these responses is accepted by the verification oracle, the
\Pro \ knows the correct value of the key. Hence the \Pro \ can now create a proving algorithm
$\mathcal{P}$ that will output
acceptable but non-authentic responses.
This algorithm $\mathcal{P}$ will not allow the
correct message to be extracted.

In more detail, after the \Pro \ has determined
the key $K = (\alpha,B)$, he chooses an arbitrary (encoded) message $M' \neq M$ and constructs $\mathcal{P}$
as follows:
\begin{enumerate}
\item Given a challenge $V$, define $\mu = V \cdot M'$.
\item Then define
$\tau = \alpha \mu + V \cdot B$.
\end{enumerate}
Suppose the \Ext \ is run on $\mathcal{P}$. It is easy to
see that $\dist (R', r^{M'}) = 0$, so the \Ext \ will  compute $\widehat{M} = M'$,
which is incorrect.
\end{proof}



Even in the absence of a verification oracle, the \Pro \ can guess the correct value of
$\alpha$, which he can do successfully with probability $1/q$. If he correctly guesses $\alpha$, he can
create a non-extractable proving algorithm. This implies that it is not possible to prove
a theorem stating that {\it any} proving algorithm yields an extractor. However, we can prove meaningful
reductions if we define the success probability of a proving algorithm
to be the \emph{average} success probability
over the $q$ possible keys that are consistent with the information given to the \Pro.

Suppose $\mathcal{P}$
is a (deterministic) proving algorithm for a message $M = (m_1, \dots , m_n) \in (\eff_q)^n$.
For each challenge $V$ and for every key $K = (\alpha,B) \in \Pos(M,S)$,
define $\chi(V,K) = 1$ if $\mathcal{P}$ returns an acceptable response for the key $K$ given the
challenge $V$, and define $\chi(V,K) = 1$, otherwise.

Since there are $\gamma q$ choices for the pair $(V,K)$,
we define the {\it average success probability} $\mathsf{succ}_{\mathsf{avg}}(\mathcal{P})$  to be
\begin{equation}
\label{avg-succ} \mathsf{succ}_{\mathsf{avg}}(\mathcal{P})
= \frac{\displaystyle{\sum_{V\in \Gamma,K \in \Pos(M,S)} \chi(V,K)}}{\gamma q} .
\end{equation}

\begin{lemma}
\label{succ.lem}
Suppose there are $D$ challenges $V$ for which $\mathcal{P}$ returns an authentic
response, and hence there are $C = \gamma -D $ challenges for which $\mathcal{P}$  returns a response
that is not authentic.  Then
\begin{equation}
\label{succ.eq} \mathsf{succ}_{\mathsf{avg}}(\mathcal{P}) \leq  1 - \frac{C(q-1)}{\gamma q}.
\end{equation}
\end{lemma}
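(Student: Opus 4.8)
The plan is to bound the numerator of (\ref{avg-succ}) by partitioning the sum over challenges according to whether $\mathcal{P}$'s response is authentic. Since $\mathcal{P}$ is deterministic, for each fixed challenge $V$ it produces exactly one response $(\mu,\tau)$, and the only freedom in the inner sum $\sum_{K \in \Pos(M,S)} \chi(V,K)$ comes from ranging over the $q$ possible keys. I would first observe that there are only two cases for a fixed $V$: either the response $\mathcal{P}(V)$ is authentic, or it is not.

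In the authentic case, the remark following Lemma \ref{alpha.lem} gives that the response is acceptable for \emph{every} $K \in \Pos(M,S)$, so $\sum_{K \in \Pos(M,S)} \chi(V,K) = q$. In the non-authentic case, I would invoke the contrapositive of Lemma \ref{acceptable.lem}: a response that is not authentic can be acceptable for \emph{at most one} key in $\Pos(M,S)$, so $\sum_{K \in \Pos(M,S)} \chi(V,K) \leq 1$. This is the one genuinely substantive input, and it is exactly what Lemma \ref{acceptable.lem} was set up to provide; beyond citing it correctly there is no real obstacle here, as the rest is bookkeeping.

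Summing over all $\gamma$ challenges, the $D$ challenges yielding authentic responses contribute $Dq$ in total, and the remaining $C = \gamma - D$ contribute at most $C$, so
\[ \sum_{V \in \Gamma,\, K \in \Pos(M,S)} \chi(V,K) \;\leq\; Dq + C. \]
Dividing by $\gamma q$ as in (\ref{avg-succ}) and substituting $D = \gamma - C$ gives
\[ \mathsf{succ}_{\mathsf{avg}}(\mathcal{P}) \;\leq\; \frac{Dq + C}{\gamma q} \;=\; \frac{(\gamma - C)q + C}{\gamma q} \;=\; 1 - \frac{C(q-1)}{\gamma q}, \]
which is (\ref{succ.eq}). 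So the proof is a short case analysis plus an arithmetic simplification; the only thing to be careful about is applying Lemma \ref{acceptable.lem} in its contrapositive form to get the ``at most one key'' bound in the non-authentic case.
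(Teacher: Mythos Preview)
Your proposal is correct and follows essentially the same approach as the paper: a case split on whether $\mathcal{P}(V)$ is authentic, using the contrapositive of Lemma~\ref{acceptable.lem} to bound the non-authentic contribution by at most one key, then summing to $Dq + C$ and dividing by $\gamma q$. The paper's proof is identical in structure, differing only in writing the sum as $C + qD = \gamma q - C(q-1)$ before dividing.
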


\begin{proof}
If $V$ is a challenge for which $\mathcal{P}$ returns an authentic
response, then $\chi(V,K) = 1$ for every $K$.
If $V$ is a challenge for which $\mathcal{P}$ does not return an authentic
response, then Lemma \ref{acceptable.lem} implies that $\chi(V,K) = 1$ for at most one $K$.
Therefore,
\[ \sum_{V\in \Gamma,K \in \Pos(M,S)} \chi(V,K)  \leq C + qD = \gamma q - C(q-1).\]
The desired result now follows from (\ref{avg-succ}).
\end{proof}

Let's now turn our attention to the response code. Even though
a response is an ordered pair $(\mu,\tau)$, it suffices to consider only the values
of $\mu$ in the extraction process
(this follows because $\mu$, $K$ and $V$ uniquely determine $\tau$; see (\ref{ver.eq})). So we will define the
response vector for a message $M$ to be
$r^M = (M \cdot V : V \in \Gamma)$.
Observe that this response vector is identical to the
response vector in the \textsf{Linear Combination   Scheme}.

\begin{lemma}
\label{L2.5}
Suppose that $\mathcal{P}$ is a proving algorithm for the
message $M$ in the \textsf{Modified Shacham-Waters Scheme}.
Let $r^M = (M \cdot V : V \in \Gamma)$ and
let $R'$ be the $\gamma$-tuple of responses computed by $\mathcal{P}$.
Then \[ \dist (r^M,R') \leq \frac{(1 - \mathsf{succ}_{\mathsf{avg}}(\mathcal{P})) \gamma q}{q-1}.\]
\end{lemma}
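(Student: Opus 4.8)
The plan is to bound $\dist(r^M,R')$ directly by the number $C$ of challenges on which $\mathcal{P}$ returns a non-authentic response, and then to feed this into Lemma \ref{succ.lem} to rewrite the bound in terms of $\mathsf{succ}_{\mathsf{avg}}(\mathcal{P})$.

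First I would argue that $r^M$ and $R'$ can only disagree at a coordinate $V$ for which $\mathcal{P}$'s response is not authentic. Recall (from the discussion immediately preceding the lemma) that in the extraction process $R'$ records only the $\mu$-components of the responses output by $\mathcal{P}$, so comparing $R'$ with $r^M = (M\cdot V : V \in \Gamma)$ is meaningful. If $\mathcal{P}$ returns an authentic response $(\mu,\tau)$ to the challenge $V$, then by definition of authenticity (equation (\ref{mu.eq})) we have $\mu = V\cdot M$, which is exactly the $V$-th coordinate of $r^M$; hence $r^M$ and $R'$ agree in this coordinate. Therefore the set of coordinates where $r^M$ and $R'$ differ is contained in the set of the $C$ challenges on which $\mathcal{P}$ is not authentic, and so $\dist(r^M,R') \leq C$.

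Next I would invoke Lemma \ref{succ.lem}, which states that $\mathsf{succ}_{\mathsf{avg}}(\mathcal{P}) \leq 1 - C(q-1)/(\gamma q)$. Rearranging gives $C \leq (1 - \mathsf{succ}_{\mathsf{avg}}(\mathcal{P}))\gamma q/(q-1)$, and chaining this with $\dist(r^M,R') \leq C$ yields the claimed inequality.

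I do not expect any real technical obstacle: the only subtlety is keeping track of the fact that $R'$ is the tuple of $\mu$-values rather than the full response pairs, which is what makes authentic responses count as agreements with $r^M$. In effect this lemma is just the repackaging of Lemma \ref{acceptable.lem} (as already distilled into Lemma \ref{succ.lem}) into a Hamming-distance statement about the response code, so that a nearest-neighbour decoding argument in the style of Theorem \ref{t1.2} can then be applied to conclude security of the \textsf{Modified Shacham-Waters Scheme}.
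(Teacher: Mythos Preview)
Your proposal is correct and follows essentially the same argument as the paper: first bound $\dist(r^M,R')$ by the number $C$ of non-authentic responses (since an authentic response has $\mu = V\cdot M$), then rearrange the inequality of Lemma~\ref{succ.lem} to bound $C$ in terms of $\mathsf{succ}_{\mathsf{avg}}(\mathcal{P})$. Your write-up is in fact slightly more explicit than the paper's about why only the $\mu$-components matter, but the logic is identical.
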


\begin{proof}
Define $C$ as in Lemma \ref{succ.lem}. Since a co-ordinate of $r^M$ differs from the corresponding
co-ordinate of $R'$ only when the response is non-authentic, it follows that
$\dist (r^M,R') \leq C$. Equation (\ref{succ.eq})
implies that
\[ C \leq \frac{(1 - \mathsf{succ}_{\mathsf{avg}}(\mathcal{P})) \gamma q}{q-1},\]
from which the stated result follows.
\end{proof}

\begin{figure}[tb]
\caption{\Ext \ for the \textsf{Modified Shacham-Waters Scheme}}\label{fig8}
\begin{center}
\fbox{
    \begin{minipage}{15cm}
    \begin{enumerate}
\item On input $\mathcal{P}$, compute the vector $R' = (\mu_V : V \in \Gamma)$,
where $(\mu_V,\tau_V) = \mathcal{P}(V)$ for all $V \in \Gamma$.
\item Find $\widehat{M} \in \mathcal{M}^*$ so that $\dist (R', r^{\widehat{M}})$ is minimised.
\item Output $\widehat{m} = e^{-1}(\widehat{M})$.
\end{enumerate}
\end{minipage}
  }
\end{center}
\end{figure}

\begin{theorem}
\label{SWsucc.thm}
Suppose that
\begin{equation}
\label{eq2.6} \mathsf{succ}_{\mathsf{avg}}(\mathcal{P}) > 1 - \frac{d^*(q-1)}{2 \gamma q},
\end{equation}
where $d^*$ is given by equation (\ref{eq2}).
Then the \Ext \ presented in Figure \ref{fig8} will always output $\widehat{m} = m$.
\end{theorem}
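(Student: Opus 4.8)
The plan is to mirror the nearest-neighbour decoding argument of Theorem~\ref{t1.2}, the only new ingredient being Lemma~\ref{L2.5}, which is what lets us replace the raw success probability by $\mathsf{succ}_{\mathsf{avg}}(\mathcal{P})$ in the keyed setting. First I would observe that the extractor of Figure~\ref{fig8} is performing nearest-neighbour decoding in the response code $\mathcal{R}^* = \{r^M : M \in \mathcal{M}^*\}$, where $r^M = (M \cdot V : V \in \Gamma)$; this is exactly the response code of the \textsf{Linear Combination Scheme}, so its hamming distance is the quantity $d^*$ computed via Lemma~\ref{l1.6} (equation (\ref{eq2})), i.e., any two distinct codewords $r^M, r^{M'}$ satisfy $\dist(r^M, r^{M'}) \geq d^*$.

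Next, set $\delta = \dist(r^M, R')$, where $M = e(m)$ and $R' = (\mu_V : V \in \Gamma)$ is the vector of $\mu$-components extracted from $\mathcal{P}$. The first step is to bound $\delta$: Lemma~\ref{L2.5} gives $\delta \leq (1 - \mathsf{succ}_{\mathsf{avg}}(\mathcal{P}))\gamma q / (q-1)$, while the hypothesis (\ref{eq2.6}) says $1 - \mathsf{succ}_{\mathsf{avg}}(\mathcal{P}) < d^*(q-1)/(2\gamma q)$; combining these yields $\delta < d^*/2$. The second step is the standard decoding argument: since $\widehat{M}$ is chosen to minimise $\dist(R', r^{\widehat{M}})$ and $M \in \mathcal{M}^*$, we have $\dist(r^{\widehat{M}}, R') \leq \delta$, so by the triangle inequality $\dist(r^M, r^{\widehat{M}}) \leq 2\delta < d^*$. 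Since $r^M$ and $r^{\widehat{M}}$ are codewords of $\mathcal{R}^*$ at distance less than $d^*$, they must coincide, and because $M \mapsto r^M$ is injective we get $\widehat{M} = M$, so the extractor outputs $e^{-1}(\widehat{M}) = e^{-1}(M) = m$.

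I expect essentially no obstacle in this final theorem: all of the genuine difficulty --- that the \Pro\ holds $M$ and $S$ but not the key, and that a non-authentic response can nonetheless be accepted for at most one of the $q$ candidate keys in $\Pos(M,S)$ --- has already been absorbed into Lemmas~\ref{acceptable.lem}, \ref{succ.lem} and \ref{L2.5}. The one point needing mild care is the meaning of ``$d^*$ given by (\ref{eq2})'': for the decoding step we only need $d^*$ to be a valid lower bound on $\dist(r^M, r^{M'})$ over all distinct $M, M' \in \mathcal{M}^*$, so if one wants the exact code distance one should first recall (as in the discussion preceding (\ref{estimate})) that the right-hand side of (\ref{eq2}), viewed as a function of $\delta \geq d$, is minimised at $\delta = d$; otherwise one simply works with whatever lower bound on $d^*$ is adopted, since $2\delta < d^*$ still forces $r^M = r^{\widehat{M}}$.
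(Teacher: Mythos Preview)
Your proposal is correct and follows essentially the same approach as the paper's proof: bound $\delta = \dist(r^M,R')$ via Lemma~\ref{L2.5}, combine with the hypothesis to get $2\delta < d^*$, and then apply the triangle inequality together with the minimality of $\widehat{M}$ to force $r^M = r^{\widehat{M}}$. The paper presents exactly this argument, with no additional ingredients.
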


\begin{proof}
Denote $\epsilon = \mathsf{succ}_{\mathsf{avg}}(\mathcal{P})$,
let $R'$ be the $\gamma$-tuple of responses computed by $\mathcal{P}$, and denote $\delta = \dist (r^M,R')$,
where $M = e(m)$.  We showed in Lemma \ref{L2.5} that
\[ \delta \leq \frac{(1 - \epsilon) \gamma q}{q-1}.\]

We want to prove that $\widehat{M} = M$.
We have that $r^{\widehat{M}}$ is a codeword in $\mathcal{R}^*$ closest to $R'$.
 Since $M$ is a codeword such that $\dist (r^M,R') = \delta$, it must be the case that
 $\dist (r^{\widehat{M}},R') \leq \delta$.
 By the triangle inequality, we get
\[ \dist (r^M, r^{\widehat{M}}) \leq  \dist (r^M,R') + \dist (r^{\widehat{M}},R') \leq \delta + \delta = 2\delta.\]
However, \[ 2\delta \leq \frac{2(1 - \epsilon) \gamma q}{q-1} <  d^*,\]
where the last inequality follows from (\ref{eq2.6}).
Since $r^M$ and $r^{\widehat{M}}$ are codewords within distance $d^*$ (which is the distance of
the response code),
it follows that $M  = \widehat{M}$
and the \Ext \ outputs $m = e^{-1}(M)$, as desired.
\end{proof}

\section{Numerical Computations and Estimates}
\label{numer.sec}

We have provided sufficient conditions for extraction to succeed for
several \POR \ schemes, based on
the success probability of the proving algorithm. Here look a bit more closely
at these numerical conditions and provide some useful comparisons
and estimates for the different schemes we have studied.

We will consider three schemes. We have the following observations, which
can be verified in a straightforward manner:
\begin{enumerate}
\item  For the \textsf{Multiblock Challenge   Scheme}, extraction will succeed if
\[ \mathsf{succ}(\mathcal{P}) > S_0 = \frac{1}{2} + \frac{\binom{n-d}{\ell}}{2\binom{n}{\ell}}.\]
This is stated in Theorem \ref{c1.4}.
\item For the  \textsf{Linear Combination  Scheme (Version 2)},
extraction will succeed if
\[ \mathsf{succ}(\mathcal{P}) > S_1 = \left( \frac{q-1}{q}\right) S_0 + \frac{1}{q}.\]
This follows from Theorem \ref{LCv2.thm}.
\item For the  \textsf{Modified Shacham-Waters Scheme},
extraction will succeed if
\[ \mathsf{succ}_{\mathsf{avg}}(\mathcal{P}) > S_2 = \left( \frac{q-1}{q}\right)^2 S_0 + \frac{2}{q} - \frac{1}{q^2}.\]
This follows from Theorem \ref{SWsucc.thm}, using the estimate for $d^*$ given in (\ref{estimate}).
\end{enumerate}
It is clear that $S_0,S_1$ and $S_2$ are extremely close for any reasonable value of $q$
(such as $q \geq 2^{32}$, for example). Therefore we will confine our subsequent analysis  to $S_0$
and state our results in terms of the \textsf{Multiblock Challenge    Scheme}.
The formula for $S_0$ is relatively simple, but it is complicated somewhat by the binomial
coefficients. Therefore, it may be useful to define an estimate that does not involve binomial coefficients.

\begin{table}[htbp]
\begin{center}
{\small
\begin{tabular}{|rrl r r || rrl r r|}
\hline
$\ell$ & $d$ & $\mathsf{succ}(\mathcal{P})$ & $n$ (Thm.\ \ref{c1.4}) & $n$ (Thm.\ \ref{estimate.thm}) & $\ell$ & $d$ & $\mathsf{succ}(\mathcal{P})$ & $n$ (Thm.\ \ref{c1.4}) & $n$ (Thm.\ \ref{estimate.thm})\\ \hline
  10000 &  10000 & 0.6 &    $62143493  $ &62133493		&   10000 &  1000 & 0.6 &   $6218850  $& 6213349 \\
& & 0.7 &                  $109145666  $ 	&109135666		&& & 0.7		&                  $10919066   $ & 10913566\\
& & 0.8 &                  $195771518  $  	&195761518		&& & 0.8 	&                  $19581651   $ & 19576151\\
& & 0.9 &	                $448152011   $ 	&448142011			& && 0.9 		&$44819701   $& 44814201\\
& & 0.99 &	       $4949841645$ 	&4949831645		& && 0.99 	          &$494988664   $ & 494983164\\
              \hline
  1000 &  10000 & 0.6 &    $6218850  $&6213349  &  1000 &  1000 & 0.6 & $622334    $ & 6213349\\
& & 0.7 &                        $10919066   $&10913567 && & 0.7 &                        $1092356    $ & 10913567\\
& & 0.8 &                        $19581651   $&19576152 && & 0.8 &                        $1958614    $ & 19576152 \\
& & 0.9 &                        $44819700   $ &44814201&& & 0.9 &                        $4482419    $ & 4481420\\
& & 0.99 &                      $494988664  $&494983164 && & 0.99 &                      $49499315    $ & 49498316\\

 	\hline
  100 &  10000 & 0.6 &    	$ 626398  $&621334 &  100 &  1000 & 0.6 & $ 62684$ & 62133\\
& & 0.7 &                      $1096413  $ &1091357 && & 0.7 &                      $109685   $ &109135 \\
& & 0.8 &                      $1962669   $ &1957615&& & 0.8 &                      $196311   $ & 195761\\
& & 0.9 &                      $4486471   $ &4481420 && & 0.9 &                      $448691   $ & 448142 \\
& & 0.99 &                    $49503366   $ &49498316&& & 0.99 &                    $4950381   $ & 4949831\\
	\hline
 50 &  10000 & 0.6 &    	$  315719  $&310667 &50 &  1000 & 0.6 &    	$31594$ & 31068 \\
& & 0.7 &                      $550718  $&5456783	&& & 0.7 &                      $55093$ & 545678 \\
& & 0.8 &                      $983840   $ &9788076	&& & 0.8 &                      $98406$& 978807\\
& & 0.9 &                      $2245736   $ &2240710	&& & 0.9 &                      $224599  $ & 224071 \\
& & 0.99 &                    $24754183  $ &24749158 && & 0.99 &                    $2475440   $ & 2474916 \\
	\hline  

   10000 &  100 & 0.6 &    $626398   $&621334 &    10000 &  10 & 0.6 & 67272& 62133 \\
& & 0.7 &                  $1096413   $ &1091356	&& & 0.7 &                              $ 114216   $ & 109136\\
& & 0.8 &                  $1962668   $&1957615	&& & 0.8 &                               $200808   $ & 195761\\
& & 0.9 &                  $4486471   $ &4481420	& && 0.9 &                               $453165   $ & 448142\\
& & 0.99 &                $4950336   $&49498316	&&& 0.99 &                              $ 4954838   $& 4949832 \\
	\hline
  1000 &  100 & 0.6 & $62684$ &62133&   1000 &  10 & 0.6 &6731& 6213\\
& & 0.7 &                  $109685   $&109135&& & 0.7 &11425 &10914\\
& & 0.8 &                  $196311    $&195761& && 0.8 &20084 & 19576\\
& & 0.9 &                  $448692   $ &448142& && 0.9 &45320 & 44814 \\
& & 0.99 &               $4950381   $ &4949831& && 0.99 &$495488  $ & 494983 \\
	\hline
  100 &  100 & 0.6 &6313 &6213&  100 &  10 & 0.6 & 677 & 621\\
& & 0.7 & 11013&10913	&& & 0.7 & 1146 & 1091\\
& & 0.8 & 19675&19576	&& & 0.8 & 2012 & 1958\\
& & 0.9 & 44913&44814	& && 0.9 & 4536 & 4481 \\
& & 0.99 &$495082  $ &494983	&& & 0.99 & 49552 & 49498 \\
	\hline
 50 &  100 & 0.6 & 3181&3106	&  50 &  10 & 0.6 & 341& 311\\
& & 0.7 & 5531&5456	&& & 0.7 &576& 546 \\
& & 0.8 & 9862&9788	&& & 0.8 & 1009& 979 \\
& & 0.9 & 22481 &22407 && & 0.0 & 2270& 2240 \\
& & 0.99 & $247565  $ &247492 & & & 0.99 & 24779 & 24749 \\
\hline
\end{tabular}
\caption{Values of $n$  which the \Ext \ will always succeed.}
\label{table:n}
}
\end{center}
\end{table}

\begin{theorem}
\label{estimate.thm}
Denote $\epsilon = 1 - \mathsf{succ}(\mathcal{P})$.
Suppose that the following inequality holds in the \textsf{Multiblock Challenge    Scheme}:
\begin{equation}
\label{succ-estimate.eq}
\frac{\ell d}{n} > \ln \left( \frac{1}{1-2\epsilon} \right).
\end{equation}
Then the \Ext \ will always succeed.
\end{theorem}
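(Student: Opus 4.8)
The plan is to show that the hypothesis \eqref{succ-estimate.eq} is nothing more than a clean, binomial-free sufficient condition for the hypothesis of Theorem~\ref{c1.4}, after which the conclusion is immediate. First I would rewrite the threshold of Theorem~\ref{c1.4} in terms of $\epsilon = 1 - \mathsf{succ}(\mathcal{P})$: the inequality $\mathsf{succ}(\mathcal{P}) > \tfrac12 + \binom{n-d}{\ell}/(2\binom{n}{\ell})$ is equivalent, after multiplying by $2$ and rearranging, to
\[
1 - 2\epsilon > \frac{\binom{n-d}{\ell}}{\binom{n}{\ell}} .
\]
So it suffices to prove the two facts $\binom{n-d}{\ell}/\binom{n}{\ell} \le e^{-\ell d/n}$ and $e^{-\ell d/n} < 1 - 2\epsilon$, since chaining them gives exactly the displayed condition of Theorem~\ref{c1.4}. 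The second fact is just a restatement of \eqref{succ-estimate.eq}: exponentiating $-\tfrac{\ell d}{n} < \ln(1-2\epsilon)$ (note that the logarithm in \eqref{succ-estimate.eq} is only defined when $1 - 2\epsilon > 0$, so the case $\epsilon < \tfrac12$ is implicitly assumed, and the claim is vacuous otherwise).

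The real content is the first fact, the bound $\binom{n-d}{\ell}/\binom{n}{\ell} \le e^{-\ell d/n}$. I would prove it by writing the ratio as a product,
\[
\frac{\binom{n-d}{\ell}}{\binom{n}{\ell}} = \prod_{i=0}^{\ell-1} \frac{n-d-i}{n-i},
\]
which is valid whenever $n - d \ge \ell$; if instead $n - d < \ell$ then $\binom{n-d}{\ell} = 0$ and the desired inequality is trivial. Each factor then satisfies $\tfrac{n-d-i}{n-i} = 1 - \tfrac{d}{n-i} \le e^{-d/(n-i)} \le e^{-d/n}$, using the elementary inequality $1 - x \le e^{-x}$ together with $n - i \le n$. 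Taking the product of the $\ell$ factors gives $\binom{n-d}{\ell}/\binom{n}{\ell} \le e^{-\ell d/n}$, as wanted.

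Combining the two bounds yields $\binom{n-d}{\ell}/\binom{n}{\ell} \le e^{-\ell d/n} < 1 - 2\epsilon$, i.e.\ $\mathsf{succ}(\mathcal{P}) > \tfrac12 + \binom{n-d}{\ell}/(2\binom{n}{\ell})$, so Theorem~\ref{c1.4} applies and the \Ext \ always outputs $\widehat{m} = m$. I do not anticipate any genuine obstacle: the argument is a short estimation, and the only points deserving a sentence of care are the degenerate case $n - d < \ell$ and the implicit standing assumption $\epsilon < \tfrac12$ needed for the logarithm in \eqref{succ-estimate.eq} to be meaningful.
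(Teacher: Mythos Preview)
Your proposal is correct and follows essentially the same route as the paper: both reduce to Theorem~\ref{c1.4} by showing that the hypothesis forces $1-2\epsilon$ to exceed the binomial ratio $\binom{n-d}{\ell}/\binom{n}{\ell}$ via the inequality $1-x\le e^{-x}$. The only cosmetic difference is that the paper inserts $(1-d/n)^{\ell}$ as the intermediate quantity whereas you use $e^{-\ell d/n}$ directly; your version has the minor advantage of spelling out the product form of the binomial ratio and the edge cases explicitly.
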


\begin{proof}
   From (\ref{succ-estimate.eq}), we obtain
\[ \ln \left( 1-2\epsilon \right) > - \frac{\ell d}{n}.\]
Now $-x > \ln(1-x)$ for $0 < x < 1$,  we obtain
\[ \ln \left( 1-2\epsilon \right) > \ell \ln \left( 1 - \frac{d}{n} \right) .\]
Exponentiating both sides of this inequality, we have
\[ 1-2\epsilon > \left( 1 - \frac{d}{n} \right)^{\ell}.\]
It is easy to prove that
\[ \left( 1 - \frac{d}{n} \right)^{\ell} > \frac{\binom{n-d}{\ell}}{\binom{n}{\ell}},\]
so it follows that
\[ 1-2\epsilon > \frac{\binom{n-d}{\ell}}{\binom{n}{\ell}}.\]
   From this, we obtain
\[ \mathsf{succ}(\mathcal{P}) > \frac{1}{2} + \frac{\binom{n-d}{\ell}}{2\binom{n}{\ell}},\]
and hence the \Ext \ will always succeed.
\end{proof}

Table \ref{table:n} presents values of $n$ (the length of an encoded message), for different values of $\ell$ (the hamming weight of the challenge), $d$ (the distance of the code) and the success probability of the proving algorithm,  such that the \Ext \ is guaranteed to succeed. We tabulate the value of $n$ as specified
by Theorems \ref{c1.4} and \ref{estimate.thm}. We see, for a wide range of parameters, that
the estimate obtained in Theorem \ref{estimate.thm} is very close to the earlier value computed in
Theorem \ref{c1.4}.


\section{Estimating the Success Probability of a Prover}
\label{confidence.sec}

\subsection{Hypothesis Testing}\label{subsec:hypothesis}
The essential purpose of a \POR \ scheme is to assure the user that their file is
indeed being stored correctly, { i.e.}, in such a manner that the user
can recover the entire file if desired.  We have considered several schemes
for testing whether this is the case, but in order to use these schemes
appropriately it is also necessary to pay attention to how we interpret the results of these tests.  Theorem~\ref{t1.1} tells us that extraction is possible for
the {\sf Basic Scheme} whenever $\succp$ is at least
$({n-\lfloor\frac{d}{2}\rfloor+1})/{n}$, hence the
information we would like to obtain from using the {\sf Basic Scheme} is
whether or not $\succp$ exceeds the necessary threshold.  Similarly, for the \textsf{Multiblock Challenge    Scheme} or the \textsf{Linear Combination Scheme}, we can compute a value $\omega$ such that extraction is possible whenever $\succp>\frac{\omega-1}{\gamma}$.  We can calculate $\succp$ for a given proving algorithm $\cal P$
if we know the values of $\cal P$'s response ${\cal P}(c)$ for
every possible challenge $c\in\Gamma$.  However, the whole
purpose of a \POR \ scheme is to provide reassurance that $\succp$ is
sufficiently large without having to request all ${\cal P}(c)$ for
all $c\in\Gamma$.  Given the prover's responses to some
subset of possible challenges, the user wishes to make a judgement
as to whether he/she is satisfied that $\succp$ is acceptably
high.  This takes us straight into the realm of classical statistical techniques
such as {\em hypothesis testing} \cite{CB,dalyetal}.

Suppose the prover has given responses ${\cal P}(c)$ to $t$
challenges $c$ chosen uniformly at random without replacement from
$\Gamma$, and that $g$ of these responses are found to be
correct.  We are concerned that the prover's success rate may not be high
enough, so we are looking for evidence to convince us that in fact it is
sufficiently high to permit extraction.  In other words, we wish to distinguish the null hypothesis
\begin{description}
\item[$H_0:$] $\succp\leq\frac{\omega-1}{\gamma};$
\end{description}
from the alternative hypothesis
\begin{description}
\item[$H_1:$] $\succp\geq\frac{\omega}{\gamma}.$
\end{description}
Suppose that $H_0$ is true.  Then the probability that the number
of correct responses is at least $g$ is itself  at most
\begin{equation}
\sum_{i=g}^t\frac{\binom{\omega-1}{i}\binom{\gamma-\omega+1}{t-i}}{\binom{\gamma}{t}}.\label{eq:prob}
\end{equation}
If this probability is less than $0.05$, then we reject $H_0$ and
instead accept the alternative hypothesis (namely that $\succp$ is
sufficiently high to permit extraction. In this case we conclude that the server is
 storing the file appropriately.) If the probability is greater
than $0.05$ then there is insufficient evidence to reject $H_0$ at
the $5\%$ significance level (so we continue to suspect that the
server is perhaps not storing the file adequately).

Alternatively, if we choose the challenges uniformly at random
{\em with} replacement, then the condition for rejecting the null
hypothesis becomes
\begin{equation*}
\sum_{i=g}^t\binom{t}{i}\left(\frac{\omega-1}{\gamma}\right)^i\left(\frac{\gamma-\omega+1}{\gamma}\right)^{t-i}<0.05.
\end{equation*}

\begin{example}\label{ex:significance}
Suppose that for the \textsf{Basic Scheme} $n=1000$, and that the minimum
distance of the response code is $400$.  Then by Theorem~\ref{t1.1} we find that
extraction is possible whenever $\succp$ is greater than $0.8$.  Suppose the
prover responds to $100$ challenges that have been chosen uniformly with
replacement, and that $87$ of the responses were correct.
We find that
\begin{align*}
\sum_{i=87}^{100}\binom{100}{i}0.8^i0.2^{100-i}\approx 0.047<0.05.
\end{align*}
Thus, in this case there is sufficient evidence to reject the null hypothesis at
the $5\%$ significance level, and so we conclude that the file is in
fact being stored correctly.

On the other hand, if only $86$ of the responses were correct, we observe that
\begin{align*}
\sum_{i=86}^{100}\binom{100}{i}0.8^i0.2^{100-i}\approx 0.08 >0.05.
\end{align*}
In this case there is not enough evidence to reject the null hypothesis at the
$5\%$ significance level, and so we continue to suspect that the server is not storing the file adequately.
\end{example}
The benefit of this statistical approach is that
given the observed responses to the challenges, for any desired value of
$\alpha$ we can construct a hypothesis
test for which the probability of inappropriately rejecting the null hypothesis
(and hence failing to catch a prover that does not permit extraction) is
necessarily less than $\alpha$.\footnote{Here we refer to probability over the set of all
possible choices of $t$ challenges.}  This is the case regardless of the true
value of $\succp$, and we do not need to make any { a priori} assumptions
about this value.

In Table~\ref{tab:rejection} we give examples of a range of possible results of
the challenge process and the corresponding outcomes in terms of whether the
null hypothesis is rejected at either the $5\%$ or $1\%$ significance level.

\subsection{Confidence Intervals}\label{subsec:conf}
Another closely-related way to portray the information provided by the sample of responses to challenges is through the use of {confidence intervals.}
We define a $95\%$ lower confidence bound $\theta_L$ by
\begin{equation*}
\theta_L=\sup\left\{\theta\bigg\vert \sum_{i=g}^t
\binom{t}{i}\theta^{i}(1-\theta)^{t-i} <0.05\right\}
\end{equation*}
This represents the largest possible value for $\succp$ for which the
probability of obtaining $g$ or more correct responses in a sample of size $t$
is less than $0.05$.  Then the decision process for the hypothesis test
described in Section~\ref{subsec:hypothesis} consists of rejecting the null hypothesis whenever
$\frac{\omega-1}{n}<\theta_L$, since if $\frac{\omega-1}{n}$ is less than the
critical value we know that the probability of a prover with success rate at
most $\frac{\omega-1}{n}$ providing $g$ or more correct responses is less than
$0.05.$

  The interval $(\theta_L,1]$ is a {\em $95\%$ confidence interval for $\succp$}:  if a large number of samples of size
$t$ were made and the corresponding intervals were calculated using this
approach, then you would expect the resulting intervals to contain the true
value of $\succp$ at least $95\%$ of the time.  The hypothesis test can be
expressed in terms of the confidence interval $(\theta_L,1]$ by stating that we
reject $H_0$ whenever $\frac{\omega-1}{n}$ does not lie in this interval.

\begin{example}
Suppose we have $n=1000$ and $d=400$ as in Example~\ref{ex:significance}, and
suppose that $90$ of the responses are correct.  Then
\begin{align*}
\theta_L&=\sup\left\{\theta\bigg\vert \sum_{i=90}^t
\binom{t}{i}\theta^{i}(1-\theta)^{t-i} <0.05\right\}\\&\approx0.836
\end{align*}
Then a $95\%$ confidence interval for $\succp$ is $(0.836,1]$, and hence we
reject the null hypothesis, as $\frac{\omega-1}{n}=0.8$ does not lie in this
interval.
\end{example}

\begin{table}[htb]
{\small\begin{align*}
\begin{array}{|ccccc||ccccc|}
\hline \frac{\omega-1}{\gamma}&t&g&\alpha=0.05&\alpha=0.01&\frac{\omega-1}{\gamma}&t&g&\alpha=0.05&\alpha=0.01\\
\hline
0.8&100&100&\checkmark&\checkmark&0.9&100&100&\checkmark&\checkmark
\\
0.8&100&95&\checkmark&\checkmark&0.9&100&95&\sf{X}&\sf{X}
\\
0.8&100&90&\sf{X}&\sf{X}&0.9&100&90&\sf{X}&\sf{X}
\\
0.8&100&85&\sf{X}&\sf{X}&0.9&100&85&\sf{X}&\sf{X}
\\
0.8&100&80&\sf{X}&\sf{X}&0.9&100&80&\sf{X}&\sf{X}
\\
\hline
0.8&200&180&\checkmark&\checkmark&0.9&200&200&\checkmark&\checkmark
\\
0.8&200&175&\checkmark&\checkmark&0.9&200&195&\checkmark&\checkmark
\\
0.8&200&170&\checkmark&\sf{X}&0.9&200&190&\checkmark&\checkmark
\\
0.8&200&165&\sf{X}&\sf{X}&0.9&200&185&\sf{X}&\sf{X}
\\
0.8&200&160&\sf{X}&\sf{X}&
0.9&200&180&\sf{X}&\sf{X}
\\
\hline
0.8&500&435&\checkmark&\checkmark&0.9&500&480&\checkmark&\checkmark
\\
0.8&500&430&\checkmark&\checkmark&0.9&500&475&\checkmark&\checkmark
\\
0.8&500&425&\checkmark&\checkmark&0.9&500&470&\checkmark&\checkmark
\\
0.8&500&420&\checkmark&\sf{X}&0.9&500&465&\checkmark&\sf{X}
\\
0.8&500&415&\sf{X}&\sf{X}&0.9&500&460&\sf{X}&\sf{X}
\\
\hline
\end{array}
\end{align*}}
\caption{Outcomes of hypothesis testing for a range of responses.  The
columns headed by values of $\alpha$ contain a tick if $H_0$ is rejected at the
corresponding significance level, and a cross otherwise.}
\label{tab:rejection}
\end{table}

\subsection{Reacting to a Suspect Prover}
One question that has not always been directly considered is what action to take
when a prover is suspected of cheating.  In the framework of
Section~\ref{subsec:hypothesis} this becomes the problem of what to do in the
case where there is insufficient evidence to reject the null hypothesis.  There
are various possible options at this point, and the choice of option will depend
on factors such as the reason for storing the file, and any costs and
inconvenience that might be associated with associated with ceasing to use that
server, or with switching to another storage provider.  For example, if a server
is simply being used as a backup service for non-critical data and there
is a high overhead associated with switching storage providers, then a user will
not want to be overhasty in taking action against a possibly innocent server.
In this case an appropriate action in the first instance might be to seek more responses to
challenges in order to avoid the possibility that the earlier set of responses
were unrepresentative of the reliability of the prover in general.

\subsection{Comparison with Approaches Followed in the Literature}
Ateniese et al.\ \cite{Aten} observe that if
$\succp\leq\frac{g}{\gamma}$ (with $g\in\mathbb{Z}$) then when ${\cal P}$ is queried on $t$ possible challenges chosen
uniformly at random without replacement then the probability $\sf Pr_{bad}$ that at least one incorrect response is
observed is given by
\begin{align*}{\sf
Pr_{bad}}&=1-\frac{\binom{g}{t}}{\binom{\gamma}{t}},\intertext{and they note that}
1-\left(\frac{g}{\gamma}\right)^t&\leq {\sf
Pr_{bad}}\leq
1-\left(\frac{g+1-t}{\gamma-t+1}\right)^t.
\end{align*}
As examples of parameters, they
point out that if $\succp=0.99$ then to achieve ${\sf
Pr_{bad}}=0.95$ requires $t=300$, and ${\sf Pr_{bad}}=0.99$
requires $t=460$.  They comment that the required number $t$ is in fact
independent of $\gamma$, sinced it is based instead on the required threshold
for $\succp$ -this observation applies equally to our analysis.

Dodis, Vadhan and Wichs \cite{DVW} use a similar approach to
Ateniese {et al.}, but in addition propose the use of a {\em
hitting sampler} that amounts to choosing which $t$ elements to
sample from a specified distribution that contains fewer than
$\binom{n}{t}$ possible sample sets but still guarantees that $\sf
Pr_{bad}$ is higher than some specified value for a given value of $\succp$ that
is less than 1.

This analysis can be interpreted in the context of a hypothesis test to
distinguish the null hypothesis
\begin{description}
\item[$H_0:$] $\succp\leq0.99;$
\end{description}
from the alternative hypothesis
\begin{description}
\item[$H_1:$] $\succp>0.99$.
\end{description}
If 300 challenges are made and all the responses are correct, then a $95\%$
confidence interval for $\succp$ is $(0.99006,1]$, so there is enough evidence
to reject the null hypothesis at the $5\%$ significance level.  However, a
$99\%$ confidence interval for $\succp$ is $(0.977,1]$, so there is insufficient
evidence to reject the null hypothesis at the $1\%$ significance level.  If, on
the other hand, 460 challenges were made and all the responses were correct then
a $99\%$ confidence interval for $\succp$ is $(0.99003,1]$ and so in this case
there {\em is} enough evidence to reject the null hypothesis at the $1\%$
significance level.  

We note that this is a special case of the analysis in
Sections~\ref{subsec:hypothesis} and \ref{subsec:conf}.  Specifically, Ateniese
et al.\ are focusing on determining the smallest number of challenges for
which an entirely correct response constitutes sufficient evidence to reject the
null hypothesis at the desired significance level.  While this does result in
the smallest number of challenges for which there is still the potential to reassure the user as to the appropriate behaviour
of the prover, it has the drawback that even a single
incorrect response results in failure to reject the null hypothesis, regardless
of whether it is true.  Taking a larger sample size has the benefit of
increasing the probability that a false
null hypothesis is rejected, without adversely affecting the probability that a true
$H_0$ fails to be rejected. For example, from Table~\ref{tab:rejection} we see
that if 90 correct responses out of 100 are observed then there is insufficient
evidence to reject the hypothesis $\succp\leq 0.8$ at the $5\%$ significance level.  However, if 180 correct responses out of 200 are observed
then there {\em is} sufficient evidence to reject this null hypothesis at the
$5\%$ significance level (in fact we even have enough evidence to reject $H_0$
at the $1\%$ significance level). 

\section{A Lower Bound on Storage and Communication Requirements}
\label{lower.sec}

In this section, we prove a bound that applies to {\it keyed} \POR \ schemes.
{Suppose that $\mathbf{M}$ is a random variable corresponding to a 
randomly chosen {\it unencoded} message $m$. Let $\mathbf{V}$ be a random variable
denoting the information stored by the \Ver \ (i.e., the key), and let $\mathbf{R}$ be a random variable
corresponding to 
the computations performed by an extractor. It is obvious that the probability that $m$ can correctly
be reconstructed is $2^{-H( \mathbf{M} | \mathbf{V}, \mathbf{R} )}$.
Now, from basic entropy inequalities, we have
\begin{eqnarray*}
H( \mathbf{M} | \mathbf{V}, \mathbf{R} ) & = & 
H( \mathbf{M},\mathbf{V}, \mathbf{R} ) - H( \mathbf{V}, \mathbf{R} ) \\
& \geq & H( \mathbf{M},\mathbf{V}, \mathbf{R} ) - H( \mathbf{V}) - H( \mathbf{R} )\\
& \geq & H( \mathbf{M}) - H( \mathbf{V}) - H( \mathbf{R} ).
\end{eqnarray*}
} 
{Suppose that the message can be reconstructed by the extractor with probability $1$. Then we have
$H( \mathbf{M} | \mathbf{V}, \mathbf{R} ) = 0$. The inequality proven above imples that
\begin{equation} 
\label{h1.eq}
H( \mathbf{M}) \leq H( \mathbf{V}) + H( \mathbf{R} ).
\end{equation}
Now suppose that the extractor is a black-box extractor.
In this situation, we have that
\begin{equation} 
\label{h2.eq} H( \mathbf{R} ) \leq \gamma \log_2 |\Delta|,
\end{equation}
since there are $\gamma$ possible challenges and each response is from 
the set $\Delta$.  
The message $m$ is a random vector in $(\eff_q)^k$, so 
\begin{equation} 
\label{h3.eq} H( \mathbf{M}) = k \log _2 q.
\end{equation}
Therefore, combining (\ref{h1.eq}), (\ref{h2.eq}) and (\ref{h3.eq}), we have the following result.}

{
\begin{theorem} Suppose we have a keyed \POR \ scheme where the message is a random 
vector in $(\eff_q)^k$, there are $\gamma$ possible challenges and each response is from 
the set $\Delta$. Suppose that a black-box extractor succeeds with probability 
equal to $1$. The the entropy of the verifier's storage, 
denoted  $H( \mathbf{V}) $, satisfies the
inequality
\[ H( \mathbf{V}) \geq k \log _2 q - \gamma \log_2 |\Delta|.\]
\end{theorem}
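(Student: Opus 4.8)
The plan is to derive the bound as a short chain of standard entropy inequalities, exactly as sketched in the paragraph preceding the statement; the only points needing care are the two structural facts about the random variables involved. First I would fix notation: let $\mathbf{M}$ be the uniformly random unencoded message in $(\eff_q)^k$, let $\mathbf{V}$ describe everything the \Ver\ retains (the key together with any auxiliary verification data), and let $\mathbf{R}$ be the random variable recording the extractor's interaction with the prover. Since we assume the black-box extractor succeeds with probability $1$, we have $\widehat m = m$ always, so $\mathbf{M}$ is a deterministic function of $(\mathbf{V},\mathbf{R})$ and hence $H(\mathbf{M}\mid\mathbf{V},\mathbf{R})=0$.

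Next I would run the entropy estimate
\[
0 = H(\mathbf{M}\mid\mathbf{V},\mathbf{R}) = H(\mathbf{M},\mathbf{V},\mathbf{R}) - H(\mathbf{V},\mathbf{R}) \geq H(\mathbf{M}) - H(\mathbf{V}) - H(\mathbf{R}),
\]
using the chain rule, subadditivity $H(\mathbf{V},\mathbf{R})\leq H(\mathbf{V})+H(\mathbf{R})$, and the elementary bound $H(\mathbf{M},\mathbf{V},\mathbf{R})\geq H(\mathbf{M})$; rearranging recovers (\ref{h1.eq}). Then $\mathbf{M}$ uniform on $(\eff_q)^k$ gives $H(\mathbf{M})=k\log_2 q$, which is (\ref{h3.eq}), and it remains to bound $H(\mathbf{R})$. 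For this I would invoke the black-box assumption: the extractor's view of the prover is completely captured by the full response table $(\mathcal{P}(c):c\in\Gamma)\in\Delta^{\gamma}$, so whatever the extractor computes is a function of this table; therefore $H(\mathbf{R})\leq\gamma\log_2|\Delta|$, which is (\ref{h2.eq}). Combining the three facts yields $H(\mathbf{V})\geq k\log_2 q-\gamma\log_2|\Delta|$.

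I expect the main obstacle to be expository rather than mathematical: pinning down exactly what $\mathbf{R}$ is and why, in the black-box model, its entropy cannot exceed $\gamma\log_2|\Delta|$ --- in particular, ruling out an extractor that ``smuggles in'' extra information through its own randomness or through adaptively chosen queries. The clean way to handle this is to define $\mathbf{R}$ to be the complete response table over $\Gamma$: every black-box extractor is subsumed by one that first reads this entire table, and no adaptive or randomised extractor can succeed with higher probability than such an exhaustive one, so passing to this worst case loses no generality while making the bound $H(\mathbf{R})\leq\gamma\log_2|\Delta|$ transparent. A secondary point worth a remark is that the opening claim ``the probability of correct reconstruction is $2^{-H(\mathbf{M}\mid\mathbf{V},\mathbf{R})}$'' is only literally needed in the easy direction $H(\mathbf{M}\mid\mathbf{V},\mathbf{R})=0$, so the argument does not actually rely on that identity holding with equality in general.
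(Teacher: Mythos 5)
Your proof follows exactly the argument the paper itself gives in the paragraphs preceding the theorem: chain rule plus subadditivity to get $H(\mathbf{M})\leq H(\mathbf{V})+H(\mathbf{R})$, then the two entropy evaluations $H(\mathbf{M})=k\log_2 q$ and $H(\mathbf{R})\leq\gamma\log_2|\Delta|$. Your extra care in defining $\mathbf{R}$ as the full response table (and observing that the $2^{-H(\cdot)}$ remark is not actually load-bearing) is a welcome tightening of the exposition, but it is the same proof.
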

}

{We mentioned previously that Naor and Rothblum \cite{NR} proved
a lower bound for a weaker form of \POR-type protocol, termed an 
``authenticator''. As noted in \cite{DVW}, the Naor-Rothblum 
bound also applies to \POR \ schemes.
Phrased in terms of entropy, their bound
states that 
\[ H( \mathbf{M}) \leq H( \mathbf{V}) \times H( \mathbf{R} ),\]
which is a weaker bound than (\ref{h1.eq}). 
}

In the case of an unkeyed scheme, the extractor is only given access to
the proving algorithm. Therefore, 
$H( \mathbf{M} | \mathbf{R} ) = 0$ if a black-box extractor succeeds with probability 
equal to $1$. From this, it follows that 
$H( \mathbf{M}) \geq H(\mathbf{R} )$ in this situation.

\section{Conclusion} We have performed a comprehensive analysis of the
extraction properties of unconditionally secure \POR \ schemes, and established
a methodology that is applicable to the analysis of further new schemes.  What
constitutes ``good'' parameters for such a scheme depends on the precise
application, but our framework allows a flexible trade-off between parameters.
One direction of possible {future} interest would be to consider the construction
of further keyed \POR\ schemes with a view to reducing the user's key storage
requirements.

\section*{Acknowledgements} Thanks to Andris Abakuks and Simon Skene for some
helpful discussions of statistics.

\appendix


\section*{Appendix}

\begin{table}[htb]
\caption{Notation used in this paper}
\label{notation.tab}
\begin{center}
\begin{tabular}{l|l}
$q$ & order of underlying finite field\\ \hline
$m$ & message \\ \hline
$m_i$ & message block \\ \hline
$\mathcal{M}$ & message space \\ \hline
$k$ & length of a message \\ \hline
$M$ & encoded message \\ \hline
$\mathcal{M}^*$ & encoded message space \\ \hline
$n$ & length of an encoded message \\ \hline
$d$ & distance of the encoded message space \\ \hline
$c$ & challenge \\ \hline
$\Gamma$ & challenge space \\ \hline
$\gamma$ & number of possible challenges \\ \hline
$r$ & response \\ \hline
$\rho$  & response function \\ \hline
$r^M$ & response vector for encoded message $M$\\ \hline
$\Delta$ & response space \\ \hline
$\mathcal{R}^*$ & response code \\ \hline
$d^*$ & distance of the response code \\ \hline
$\mathcal{P}$ & proving algorithm \\ \hline
$\mathsf{succ}(\mathcal{P})$ & success probability of proving algorithm \\ \hline
$\widehat{m}$ & message outputted by the \Ext \\ \hline
$K$ & key (in a keyed scheme) \\ \hline
$S$ & tag (in a keyed scheme) \\ \hline
$\dist $ & hamming distance between two vectors
\end{tabular}
\end{center}
\end{table}

\end{document}